\newtheorem{lemma}{Lemma}
\newtheorem{proposition}{Proposition}
\newtheorem{assumption}{Assumption}
\DeclareMathOperator*{\argmin}{arg\,min}
\DeclareMathOperator*{\argmax}{arg\,max}
\DeclareMathOperator{\SNR}{SNR}
\def\mb#1{\mathbf{#1}}
\def\nn{\nonumber}
\def\beq{\begin{equation}}
\def\eeq{\end{equation}}
\def\beqa{\begin{eqnarray}}
\def\eeqa{\end{eqnarray}}
\def\ie{{\it i.e.,\ \/}}
\def\defeq{\stackrel{\Delta}{=}}
\def\Pvec{\boldsymbol{\pi}}
\def\pmnkvec{\boldsymbol{p}_{mnk}}
\def\lvec{\boldsymbol{\lambda}}
\def\thetavec{\boldsymbol{\theta}}
\def\l{\lambda}
\def\a{\alpha}
\def\SNR{\text{SNR}}
\def\nom{\text{nom}}
\begin{document}

\title{Jointly Optimal Channel and Power Assignment for Dual-Hop Multi-channel Multi-user Relaying}

\author{Mahdi Hajiaghayi, \IEEEmembership{Student Member, IEEE}, Min Dong, \IEEEmembership{Senior Member, IEEE}, and Ben Liang, \IEEEmembership{Senior Member, IEEE}
\thanks{M.~Hajiaghayi and B.~Liang are with the Department of Electrical and Computer Engineering, University of Toronto, Canada. M.~Dong is with the Department of Electrical Computer and Software Engineering, University of Ontario Institute of Technology, Canada.  Emails:  \{mahdih,liang\}@comm.utoronto.ca, min.dong@uoit.ca. This work was supported in part by the Natural Sciences and Engineering Research Council of Canada and the Ontario Ministry of Research and Innovation.  A preliminary version of this work has appeared in \cite{Hajiaghayi-Info11}.  This is the full version of a paper to appear in the \textit{IEEE Journal on Selected Areas in Communications, Special Issue on Cooperative Networking –- Challenges and Applications (Part II)}, October 2012.
}
}

\markboth{\protect \footnotesize Long version of IEEE J.~Select.~Areas Commun., vol.~30, no.~9, October 2012}{Hajiaghayi, Dong, and Liang}

\maketitle


\begin{abstract}

We consider the problem of jointly optimizing channel pairing, channel-user assignment, and power allocation, to maximize the weighted sum-rate, in a single-relay cooperative system with multiple channels and multiple users. Common relaying strategies are considered, and transmission power constraints are imposed on both individual transmitters and the aggregate over all transmitters. The joint optimization problem naturally leads to a mixed-integer program.  Despite the general expectation that such problems are intractable, we construct an efficient algorithm to find an optimal solution, which incurs computational complexity that is polynomial in the number of channels and the number of users.  We further demonstrate through numerical experiments that the jointly optimal solution can significantly improve system performance over its suboptimal alternatives.

\end{abstract}

\section{Introduction}

We consider the problem of resource assignment for multi-channel multi-user communication through relaying. The problem typically arises in cellular communication or wireless local area networks, through either dedicated relay stations or users temporarily serving as relay nodes.
In traditional narrow-band cooperative relaying systems, the relay
retransmits a processed version of the received signal over the same frequency channel.
In contrast, when multiple frequency channels are available, the relay can exploit the additional frequency dimension, to process incoming signals adaptively based on the diversity in channel strength.

In narrow-band cooperative relaying systems, the relay
retransmits a processed version of the received signal over the same frequency channel.
In contrast, when multiple frequency channels are available, the relay can exploit the additional frequency dimension, to process incoming signals adaptively based on the diversity in channel strength.
\textit{Channel
pairing}, which devises a matching of incoming and outgoing subcarriers in OFDM-based relaying, was
proposed independently in \cite{Hottinen2006} and \cite{Herdin2006} for
single-user relaying%
\footnote{Since a vast majority of multi-channel relaying systems in the literature are based on OFDM, we use it as an illustrative example in this work, so that the terms ``channel'' and ``subcarrier'' are synonymous.}%
. In a multi-user communication environment, both incoming
and outgoing channels at the relay are shared among all users. A crucial problem is to determine the assignment of a subset of incoming-outgoing channel pairs to each user, which we term \textit{channel-user assignment}. Since the channel condition can vary drastically for different users, and over the same incoming and outgoing channels, judicious channel-user assignment and channel pairing can potentially lead to significant improvement in spectral efficiency.  Together with \textit{power allocation} over multiple channels at the transmitters, essential for performance optimization, these are three main resource assignment problems in multi-channel multi-user relaying.

There is strong correlation among channel pairing, channel-user assignment, and power allocation. Joint consideration of these three problems is required to achieve optimal system performance.
However, the combinatorial nature of channel pairing and assignment generally leads to
a mixed-integer programming problem, whose solution often bears prohibitive computational complexity and renders the problem intractable.
As a result, previous attempts to optimize the performance of multi-channel multi-user relaying systems through resource allocation often consider only a subset of these three problems \cite{Hottinen2007}\nocite{DongHajiaghayiLiang:submTSP2011,GuoqingLi2006,Heikkinen2011,Hammerstrom2007,Vandendorpe08,ZhangInfo08,
Wang2008,Wang2009_2,Hajiaghayi:TSP2011,Dang2010,HSU2010} - \cite{Yu2007}, or  adopt  suboptimal approaches \cite{Han2009}\nocite{Zhou2009,Awad2008,CuiLau2009,XuInfocom2010,Hua2010} - \cite{Liu:TWCOM2010}.

In this work, we consider all three resource assignment problems in a dual-hop multi-channel relaying network for multi-user communication through a single relay, under several common relaying strategies.  We show that there is an efficient method to jointly optimize channel pairing, channel-user assignment, and power
allocation in such general dual-hop relaying networks.  The proposed solution framework is built upon continuous
relaxation and Lagrange dual minimization.  Although this approach is often applied to integer programming problems \cite{IP}, it generally provides only heuristic or approximate solutions.  However, by exploring the rich structure in our problem, we show that judicious reformulation and choices of the optimization trajectory can preserve both the binary constraints and the strong duality property of the continuous version, thus enabling a jointly optimal solution.

Through reformulation, we transform the core of the original problem into a special incidence of the class of three-dimensional assignment problems, which is NP hard in general but has polynomial-time solutions -- in terms of the number of channels and users -- for our specific setting of channel pairing and channel-user assignment.  For the often studied conventional decode-and-forward (DF) relaying with a maximum weighted sum-rate objective, we further propose a divide-and-conquer algorithm for dual minimization, which guarantees that convergence to an optimal solution requires only a polynomial number of iterations in the number of channels.  This ensures the scalability of the proposed solution to large multi-channel systems.

Our proposed solution is applicable to a wide range of system configurations, accommodating both total and individual power constraints, and allowing direct source-destination links in relaying.  We show that it can be modified to work with various relaying strategies in addition to DF, including variants of compress-and-forward (CF) and amplify-and-forward (AF).  It also accommodates general concave utility functions.
Through simulation and numerical comparison, we further illustrate that there is often a large performance gap between the jointly optimal solution and the suboptimal alternatives.

The rest of the paper is organized as follows. We first provide a literature review of the related work in Section \ref{sec_relatedwork}. In Section \ref{systemModel}, we discuss the system model and formulate the joint optimization problem. For weighted sum-rate maximization with DF relaying, we describe our framework of finding the optimal solution with polynomial complexity in Section \ref{sec_totindpower}. Extension to other relaying strategies are explained in Section \ref{sec_extensions}.   Numerical studies are presented in Section \ref{sec_simulation}, and conclusions are given in Section \ref{sec_conclusion}.

\section{Related Work}\label{sec_relatedwork}

Most existing works on optimizing resource allocation for multi-channel relaying systems consider a subset of the three aforementioned problems. After channel pairing was proposed in \cite{Hottinen2006} and \cite{Herdin2006}, its optimization has been considered in several studies. In the absence of the direct source-destination link, \cite{Hottinen2007}  showed that the sorted-SNR channel pairing scheme, which matches the incoming and outgoing subcarriers according to the sorted order of their SNRs for some given power allocation, is sum-rate optimal for a single-user AF relaying OFDM system. When the direct source-destination link is available, a low complexity optimal channel pairing scheme was established in \cite{DongHajiaghayiLiang:submTSP2011} for AF relaying. In addition, it was shown that channel pairing is optimal among all unitary linear processing at the relay under a fixed gain power assumptions. However, none of these works considered optimizing power allocation.
Channel-user assignment in multi-user relaying networks, under given power allocation, was considered in \cite{GuoqingLi2006}, where the authors sought an optimal channel-user and channel-relay assignment to maximize the uplink data rate for AF and DF relaying with multiple relays. For a multi-channel network with multiple sources, single AF relay, and single destination,  \cite{Heikkinen2011} studied the problem of channel pairing and channel-user assignment. It maximizes the sum received SNR, assuming that the power allocation is given. A suboptimal solution is proposed for distributed implementation using game theory.
Finally, the problem of optimal power allocation for OFDM relaying in specific relay network setups was studied in numerous works for different relay strategies and power constraints, see for example \cite{Hammerstrom2007,Vandendorpe08,ZhangInfo08}.

Jointly optimizing channel pairing and power allocation for \textit{single-user} relaying was considered in several studies. Without the direct source-destination link, \cite{Wang2008} and \cite{Wang2009_2} considered this problem for dual-hop DF relaying in an OFDM system for total power and individual power constraints, respectively. It was shown that joint subcarrier pairing and power allocation are separable for sum-rate optimization. This separation was also established in the general multi-hop case in \cite{Hajiaghayi:TSP2011}, for both AF and DF relaying, and under either total power or individual power constraints.  With consideration for the direct source-destination link, the authors of \cite{Dang2010} and \cite{HSU2010} studied joint subcarrier pairing and power allocation in a single-user OFDM system, for AF and DF relaying respectively.  The joint optimization problems were formulated as mixed-integer programs and solved in the Lagrange dual domain.  Although strict optimality was not established, the proposed solutions were shown to be asymptotically optimal as the number of subcarriers approaches infinity, based on the frequency-domain virtual time-sharing argument \cite{Yu2006}.
For relay-assisted multi-user scenarios, joint optimization of channel-user assignment and power allocation was considered in \cite{Yu2007} for communication between a base station and users who have the ability to relay information for each other. Based on the same virtual time-sharing argument \cite{Yu2006}, asymptotically optimal solution was provided for network utility maximization.

The problem is especially challenging when channel pairing, channel-user assignment, and power allocation need to be optimized jointly in relay-assisted multi-user scenarios. Existing work to tackle it has been scarce. In \cite{Han2009}, such joint optimization was considered for cooperation among users in uplink communication, accounting for the splitting of bandwidth at a user that needs to simultaneously transmit its own data and relay for others. The proposed problem was NP hard and a suboptimal heuristic algorithm was constructed. The authors of \cite{Zhou2009} studied this problem for a single relay using DF \emph{without} the direct source-destination link. Under a total power constraint, they showed that, for sum-rate maximization, it is optimal to \textit{separately} design channel-user assignment, channel pairing, and power allocation. However, this approach is suboptimal for the general case when the direct link is available, when the user weights are non-uniform, or when individual power constraints are considered.  In comparison, we consider more general relaying strategies that use the direct source-destination link, so that the simple pairing scheme based on sorted channel gain is no longer optimal. Furthermore, our proposed approach accommodates individual power constraints in addition to total power constraints, relaying strategies other than DF, and other optimization objectives.  In Section \ref{sec_simulation}, we further illustrate with numerical data that there is a large performance gap between such a separate optimization approach and the jointly optimal solution.

There are also other studies on resource allocation in multi-channel relaying systems, with different system models from the one presented in this paper (for example, \cite{Awad2008, CuiLau2009, XuInfocom2010, Hua2010, Liu:TWCOM2010}).
Due to the significant complexity in these system models, no general optimal solution has been found. Rather, suboptimal algorithms are proposed with an aim to support satisfactory system performance.  In contrast, in this work we tackle the problem of joint resource optimization in a simpler, single-relay system with multiple users, proposing a provably optimal solution with a formal proof for polynomial-time complexity. Some preliminary results of this study have appeared in \cite{Hajiaghayi-Info11}. This version contains substantial extensions, adding detailed solutions on how to accommodate alternate power constraints and performance objectives, and presenting new derivations, proofs, and numerical results.

\section{System Model and Problem Formulation} \label{systemModel}

We consider the scenario where a source communicates with $K$ users via a single relay as illustrated in Fig. \ref{fig1}.  The available radio spectrum is divided into $N$ equal bandwidth
channels, accessible by all nodes.  We focus on the downlink in our analysis in this paper, but the proposed solution framework can be adopted for the uplink by swapping the roles of the source and the users.

We denote by $h_{i}^{sr}$,  $h_{i}^{rk}$, and $h_{i}^{sk}$ the state of channel $i$, for $1 \leq i \leq N$,  over the first hop between the source and the relay, over the second hop between the relay and user $k$, and over the direct link between the source and user $k$, respectively.  The additive noise on a channel at the relay and user $k$ are modeled as i.i.d.~zero-mean Gaussian random variables with variances $\sigma_{r}^2$ and $\sigma_{k}^2$, respectively.  The channel state is assumed to be available at both the source and the relay, which enables them to dynamically assign channels and allocate power according to channel conditions.

\begin{figure}[t]
\centering
\includegraphics[scale=.5]{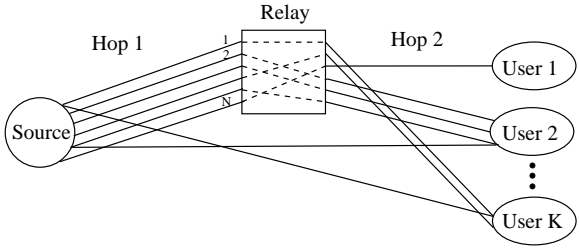}
\vspace*{-1em}
\caption{Illustration of dual-hop multi-channel relaying.} \label{fig1}
\vspace*{-1em}
\end{figure}

\subsection{Channel Assignment}
The relay transmits a processed version of the incoming data to its intended user using a specific relay strategy.  The relay also conducts channel pairing and channel-user assignment. Channel pairing refers to a one-to-one mapping between the incoming channels and outgoing channels at the relay. Through channel-user assignment, on the other hand, a subset of incoming-outgoing channels is assigned to each user. Clearly, channel-pairing choices are closely connected with how the channels are assigned to the users.  We term the joint decision on channel pairing and channel-user assignment the \emph{channel assignment} problem. As the different channels exhibit various quality, judicious channel assignment can potentially lead to significant improvement in spectral efficiency.

We say a path $\mathcal{P}(m,n,k)$ is selected, if first-hop channel $m$ is paired with second-hop channel $n$, and the pair of channels $(m,n)$ is assigned to user $k$.  We define indicator functions $\phi_{mnk}$ for channel assignment as follows:
\beq
\phi_{mnk} = \begin{cases}
1, & \text{if $\mathcal{P}(m,n,k)$ is selected}, \\
0, & \text{otherwise} ~.
\end{cases}
\eeq
There is a one-to-one mapping between first-hop channels and second-hop channels. Furthermore, we require that each channel pair be assigned to only one user, but a user may be assigned multiple channel pairs.  Hence $\phi_{mnk}$ is constrained by
\begin{align}
\sum_{n=1}^{N} \sum_{k=1}^K \phi_{mnk} = 1, \forall m , \quad
\sum_{m=1}^{N} \sum_{k=1}^K  \phi_{mnk} = 1, \forall n . \label{SP_SAconst}
\end{align}

\subsection{Power Allocation}
Along any path $\mathcal{P}(m,n,k)$, the source and relay transmission powers are denoted by $P^s_{mnk}$ and $P^r_{mnk}$, respectively.
We consider both \emph{individual power constraints},
\begin{align}
\sum_{n=1}^{N} \sum_{m=1}^{N} \sum_{k=1}^K  P^s_{mnk} \leq P_s, \quad
\sum_{n=1}^{N} \sum_{m=1}^{N} \sum_{k=1}^K  P^r_{mnk} \leq P_r, \label{eq_ind_power}
\end{align}
and the \emph{total power constraint},
\begin{equation}
\sum_{n=1}^{N} \sum_{m=1}^{N} \sum_{k=1}^K  (P^s_{mnk} + P^r_{mnk}) \leq P_t, \label{eq_totalpower}
\end{equation}
where $P_s$, $P_r$, and $P_t$ are the maximum allowed transmission power by the source, the relay, and the combined source and relay, respectively.  This is a general representation of the power limitations imposed on the system including, e.g., hardware constraint, legal or regulatory requirement, or energy conservation.  Note that this general representation can be easily tailored to also specify systems with only individual power constraints, or only total power constraint, by setting one or more of $P_s$, $P_r$, and $P_t$ to sufficiently large values.

Each constraint above is either inactive (i.e., at optimality it is satisfied with strict inequality) or active (i.e., at optimality it is satisfied with equality).  We consider the case where all active constraints are \textit{strictly} active, i.e., if the problem is modified by changing the power limits by small amounts, at optimality the constraints remain active.  This is without loss of generality, since any constraint that is active but not strictly active can be made inactive, by increasing the power limit by a small amount, without altering the problem solution.

Define $\pmnkvec = (P^s_{mnk}, P^r_{mnk}, P^s_{mnk}+P^r_{mnk})$ and $\Pvec = (P_s,P_r,P_t)$.

\subsection{Relaying Strategy}
We initially focus on DF relaying but will later show how the proposed method can be applied to other relaying schemes, such as AF and CF.
We consider a general case where, apart from the relay path, the direct links are available between the source and users. In this case, the signals received from the relay path and the direct link can be combined to improve the decoding performance.
In DF, each transmission time frame is divided into two equal slots.  In the first slot, the source transmits an information block on each channel, which is received by both the relay and the intended user.  In the second slot, the relay attempts to decode the received message from each incoming channel (first hop), and forwards a version of the decoded message on an outgoing channel (second hop) to the intended user.  The intended user collects the received signals in both time slots, applies maximum ratio combining, and decodes the message.

Consider the conventional repetition-coding based DF relaying \cite{CoverElGamalIT79,laneman04}, where the relay is required to fully decode the incoming message, re-encode it with repetition coding, and forward it to the intended user.  The maximum achievable source-destination rate on path $\mathcal{P}(m,n,k)$ is given by \cite{laneman04}
\begin{align}
R(m,n,k)= \frac{1}{2} &\min \{ \log(1+ a_m P^s_{mnk}), \nonumber\\
& \log(1+ c_{mk} P^s_{mnk} + b_{nk} P^r_{mnk} ) \} ~,
\label{eq_Rmnk}
\end{align}
where $a_m = \frac{|h^{sr}_{m}|^2}{\sigma_r^2}$, $b_{nk} = \frac{|h_{n}^{rk}|^2}{\sigma_k^2}$, and $c_{mk} =\frac{|h_{m}^{sk}|^2}{\sigma_k^2}$ are normalized channel power gains against the noise variance at the relay and user $k$, and the base of logarithm is 2.

\subsection{Optimization Objective}
Various rate-utility functions can be used as objectives. For convenience of illustration, in this paper we focus on the weighted sum-rate.
Denoting by $w_k$ the relative weight for user $k$, such that $\sum_{k=1}^K w_k = 1$, we formulate the problem of weighted sum-rate maximization as
\begin{align}
&\max_{\mathbf{\Phi},\mathbf{P}^s, \mathbf{P}^r} \sum_{k=1}^K  w_k \sum_{n=1}^{N} \sum_{m=1}^{N}  \phi_{mnk} R(m,n,k)  \label{eq_opt_DF} \\
s.t. &  \quad   \eqref{SP_SAconst}, \eqref{eq_ind_power}, \eqref{eq_totalpower}, \nonumber\\
& \quad \phi_{mnk} \in \{0,1\}, \quad \forall m,n,k \label{eq_Phi_bin} \\
& \quad P^s_{mnk} \geq 0, ~P^r_{mnk} \geq 0, \quad \forall m,n,k, \label{eq_P_nonneg}
\end{align}
where $\mathbf{\Phi}\defeq [\phi_{mnk}]_{N\times N\times K}$, $\mathbf{P}^s\defeq [P^s_{mnk}]_{N\times N\times K}$, and $\mathbf{P}^r\defeq [P^r_{mnk}]_{N\times N\times K}$.
Given the relative weights and the channel gains on each path $\mathcal{P}(m,n,k)$, the optimization problem (\ref{eq_opt_DF}) finds the jointly optimal solution of channel pairing, channel-user assignment, and power allocation by optimizing $\mathbf{\Phi}$, $\mathbf{P}^s$, and $\mathbf{P}^r$.

\section{Weighted Sum-Rate Maximization for Multi-channel DF}
\label{sec_totindpower}

The optimization in \eqref{eq_opt_DF} is a mixed-integer programming problem, which in general has intractable complexity due to its combinatorial nature.  However, in this section, we present a method to find an optimal solution with computational complexity growing only polynomially with the number of channels and users.

\subsection{Convex Reformulation via Continuous Relaxation}

The proposed approach is built on the reformulation of \eqref{eq_opt_DF} into a convex optimization problem with a real-valued $\tilde{\mathbf{\Phi}}$ and strong Lagrange duality. We later show that the reformulated problem is optimized by a binary $\mathbf{\Phi}=\tilde{\mathbf{\Phi}}$.

We first substitute
\begin{align}
P^s_{mnk} = \frac{P^s_{mnk}}{\phi_{mnk}}  \quad \text{and} \quad
P^r_{mnk} = \frac{P^r_{mnk}}{\phi_{mnk}}
\label{eq_divphi}
\end{align}
into the objective of \eqref{eq_opt_DF}.  This does not change the original optimization problem, since if $\phi_{mnk}=1$, then \eqref{eq_divphi} is trivially true; and if $\phi_{mnk}=0$, then by l'H\^{o}pital's rule, $\phi_{mnk}R(m,n,k)$ remains zero before and after the substitution.  Indeed, it obviously preserves the optimality of power allocation to enforce $P^s_{mnk} = P^r_{mnk} =0$ for all $(m,n,k)$ such that $\phi_{mnk}=0$.

We then relax the binary constraint on $\mb{\Phi}$ by defining a continuous version of $\phi_{mnk}$, denoted by $\tilde{\phi}_{mnk}$, which may take any value in the interval $[0,1]$.  Then, the reformulated version of the optimization problem \eqref{eq_opt_DF} can be written as
\begin{align}
&\max_{\tilde{\mathbf{\Phi}},\mathbf{P}^s, \mathbf{P}^r}   \sum_{m,n,k}  \frac{w_k}{2} \tilde{\phi}_{mnk}  \min \{ \log(1+ a_m \frac{P^s_{mnk}}{\tilde{\phi}_{mnk}}), \nonumber \\
& \quad \quad \quad \quad \log(1+ c_{mk} \frac{P^s_{mnk}}{\tilde{\phi}_{mnk}} + b_{nk} \frac{P^r_{mnk}}{\tilde{\phi}_{mnk}}) \} \label{OPTP2} \\
s.t.
& \quad \sum_{n,k} \tilde{\phi}_{mnk} = 1, \forall m,
\quad \sum_{m,k} \tilde{\phi}_{mnk} = 1, \forall n, \label{eq_tphi_sum} \\
& \quad 0 \leq \tilde{\phi}_{mnk} \leq 1, \quad \forall m,n,k, \label{eq_tphi_leq1} \\
& \quad \eqref{eq_ind_power}, \eqref{eq_totalpower}, \eqref{eq_P_nonneg}. \nonumber
\end{align}

The objective function \eqref{OPTP2} is concave in $(\tilde{\mathbf{\Phi}},\mathbf{P}^s, \mathbf{P}^r)$, since $\tilde{\phi}_{mnk}\log(1+ a_{m}\frac{P^s_{mnk}}{\tilde{\phi}_{mnk}})$ and $\tilde{\phi}_{mnk}\log(1+ c_{mk} \frac{P^s_{mnk}}{\tilde{\phi}_{mnk}} + b_{nk} \frac{P^r_{mnk}}{\tilde{\phi}_{mnk}})$ are the {\it perspectives}
of the concave functions $\log(1+ a_{m}P^s_{mnk})$ and $\log(1+ c_{mk}P^s_{mnk} + b_{nk}P^r_{mnk})$, respectively%
\footnote{The perspective of function $f: \mb{R}^n \rightarrow \mb{R}$ is defined as $g(x,t) = t f(x/t)$, with domain $\{(x,t) | x/t \in \text{dom} f, t>0 \}$. The perspective operation preserves concavity \cite{Boydbook}. Here we include $\tilde{\phi}_{mnk}=0$ in the domain of the perspectives. It is easy to see that they remain concave.}%
.
It is also noted that the minimum of two concave functions is a concave function. Furthermore, since all the constraints are affine, and there are obvious feasible points, Slater's condition is satisfied \cite{Boydbook}.  Hence, the convex optimization problem \eqref{OPTP2} has zero duality gap, suggesting that a globally optimal solution can be found in the Lagrange dual domain.

Using continuous relaxation on integer programming problems is not a new technique \cite{IP}. However, doing so typically leads only to heuristics or approximations.  Clearly, solving a maximization problem with relaxed constraints generally gives only an upper bound to the original problem. In particular, all global optima for \eqref{OPTP2} do not necessarily give a binary $\tilde{\mathbf{\Phi}}$, which is required for \eqref{eq_opt_DF}.  However, we next show that, in the problem under consideration, indeed there always exists a globally optimal solution to \eqref{OPTP2} consisting of a binary $\tilde{\mathbf{\Phi}}$, and the proposed approach ensures that such an optimal solution is found in polynomial time.

\subsection{Power Allocation  via Maximization of Lagrange Function over $\mb{P}^s$ and $\mb{P}^r$}
\label{sec_powerallocation}

Consider the Lagrange function for (\ref{OPTP2}),
\begin{align}
\mathcal{L}(\tilde{\mathbf{\Phi}},\mb{P}^s,\mb{P}^r,\lvec) = & \sum_{m,n,k} \frac{w_k}{2} \tilde{\phi}_{mnk} \min \{ \log(1+ a_m \frac{P^s_{mnk}}{\tilde{\phi}_{mnk}}), \nonumber \\
& \hspace*{-60pt} \log(1+ c_{mk} \frac{P^s_{mnk}}{\tilde{\phi}_{mnk}} + b_{nk} \frac{P^r_{mnk}}{\tilde{\phi}_{mnk}}) \} - (\l_s+\l_t) \sum_{m,n,k} P^s_{mnk} \nonumber\\
&  \hspace*{-60pt}  - (\l_r+\l_t) \sum_{m,n,k} P^r_{mnk} + \lvec \Pvec^T , \label{LagFun1}
\end{align}
where $\lvec = (\l_s, \l_r, \l_t)$ is the vector of Lagrange multipliers associated with the power constraints \eqref{eq_ind_power} and \eqref{eq_totalpower}. The dual function is therefore
\begin{align}
& g (\lvec) = \max_{\tilde{\mathbf{\Phi}},\mb{P}^s,\mb{P}^r} \mathcal{L}(\tilde{\mathbf{\Phi}},\mb{P}^s,\mb{P}^r,\lvec) \label{dual} \\
s.t. &  \quad   \eqref{eq_tphi_sum}, \eqref{eq_tphi_leq1}, \eqref{eq_P_nonneg} . \nonumber
\end{align}

The above maximization of the Lagrange function can be carried out by first optimizing the power allocation given fixed $\tilde{\mathbf{\Phi}}$.  The KKT conditions suggest that the maximization of \eqref{dual} over $\mb{P}^s$ and $\mb{P}^r$ can be decomposed into $N\times N\times K$ independent subproblems to find the optimal $P^{s*}_{mnk}$ and $P^{r*}_{mnk}$:
\begin{align}
\max_{P^s_{mnk} \geq 0, P^r_{mnk} \geq 0} \mathcal{L}_{mnk}(\tilde{\phi}_{mnk},P^s_{mnk},P^r_{mnk},\lvec)
\label{Subproblem}
\end{align}
where
$\mathcal{L}_{mnk}(\tilde{\phi}_{mnk},P^s_{mnk},P^r_{mnk},\lvec)$
is the part of $\mathcal{L}(\tilde{\mathbf{\Phi}},\mb{P}^s,\mb{P}^r,\lvec)$ that concerns only the path $\mathcal{P}(m,n,k)$.

It can be shown that the solution to \eqref{Subproblem} has the following form.  The derivation details are given in Appendix \ref{appendix_Smnk}.  Note that, since $P^{s*}_{mnk}$ and $P^{r*}_{mnk}$ depends on $\tilde{\phi}_{mnk}$ in an obvious way, we simply present them as functions of $\lvec$ for the rest of this section.
\begin{align}
& \quad \big(P^{s*}_{mnk}(\lvec), P^{r*}_{mnk}(\lvec)\big)
= \nonumber\\
&\begin{cases}
\left( \left[\frac{w_k}{\a(\l_s+\l_t)} - \frac{1}{a_m} \right]^+ \tilde{\phi}_{mnk}, 0\right), \quad \text{if } a_m \leq c_{mk} \\
\mb{p}_1, \quad \text{if } a_m > c_{mk} \text{ and } \frac{c_{mk}}{\l_s+\l_t} < \frac{b_{nk}}{\l_r+\l_t} \\
\displaystyle{\argmax_{(P^s,P^r)\in\{\mb{p}_1, \mb{p}_2\}}} \mathcal{L}_{mnk} (\tilde{\phi}_{mnk},P^s,P^r,\lvec), \quad \text{o.w.}
\end{cases}
\label{eq_Ssrmnkopt}
\end{align}
where $\a = 2\ln 2$, $[x]^+ = \max\{x,0\}$,
$\mb{p}_1 = \left(1, \frac{a_m-c_{mk}}{b_{nk}} \right) \times \left[ \frac{w_k b_{nk}}{\a (b_{nk}(\l_s+\l_t) +(a_m-c_{mk})(\l_r+\l_t))} - \frac{1}{a_m} \right]^+\tilde{\phi}_{mnk}  $, and
$\mb{p}_2 = \left(\left[\frac{w_k}{\a (\l_s+\l_t)} - \frac{1}{c_{mk}}\right]^+ \tilde{\phi}_{mnk} , 0 \right)$.

\subsection{Channel Assignment via Maximization of Lagrange Function over $\tilde{\mathbf{\Phi}}$}
\label{sec_op_Phi}

To maximize the Lagrange function over $\tilde{\mathbf{\Phi}}$, we define
\begin{align}
A_{mnk}(\lvec) = \frac{1}{\tilde{\phi}_{mnk}}\mathcal{L}_{mnk}(\tilde{\phi}_{mnk},P^{s*}_{mnk}(\lvec),P^{r*}_{mnk}(\lvec),\lvec) ~.
\label{Amnk}
\end{align}
Note that $A_{mnk}(\lvec)$ is independent of $\tilde{\phi}_{mnk}$ because of the multiplication form of \eqref{eq_Ssrmnkopt} by $\tilde{\phi}_{mnk}$.
Then, \eqref{dual} can be determined by the following optimization problem over $\tilde{\mathbf{\Phi}}$:
\begin{align}
& \max_{\tilde{\mathbf{\Phi}}} \sum_{m,n,k} \tilde{\phi}_{mnk} A_{mnk}(\lvec) \label{OPTP3} \\
s.t. &  \quad   \eqref{eq_tphi_sum}, \eqref{eq_tphi_leq1} . \nonumber
\end{align}
To proceed, we present the following lemma on the decomposition of $\tilde{\mathbf{\Phi}}$.

\begin{lemma} \label{lemma1}
Any matrix $\tilde{\mathbf{\Phi}}=[\tilde{\phi}_{mnk}]_{N\times N\times K}$ with $0\leq \tilde{\phi}_{mnk} \leq 1$ and satisfying  \eqref{eq_tphi_sum} can be decomposed into one matrix $\mb{X}=[x_{mn}]_{N\times N}$ and $MN$ vectors $\mb{y}^{mn}=[y_k^{mn}]_{1\times K}$, such that $\tilde{\phi}_{mnk}= x_{mn} y_k^{mn}, \forall m,n,k$, with $0 \leq x_{mn} \leq 1$ and $0\leq y_k^{mn} \leq 1$, satisfying
$\sum_{n} x_{mn} =1, \forall m$, $\sum_{m} x_{mn} =1, \forall n$,  and
$\sum_{k} y_k^{mn} =1, \forall m,n$.
Furthermore, any such matrix $\mb{X}$ and vectors $\mb{y}^{mn}$ uniquely determines a matrix $\tilde{\mathbf{\Phi}}$ that is given by $\tilde{\phi}_{mnk}= x_{mn} y_k^{mn}$ and satisfies \eqref{eq_tphi_sum}.
\end{lemma}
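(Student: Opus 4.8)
\noindent The plan is to exhibit the factorization explicitly, verify the claimed constraints on $\mb{X}$ and the $\mb{y}^{mn}$, handle one degenerate case, and then dispatch the converse by direct substitution.

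For the forward direction, I would first set $x_{mn} \defeq \sum_{k=1}^K \tilde{\phi}_{mnk}$ for all $m,n$. Then the two equalities in \eqref{eq_tphi_sum} read exactly $\sum_n x_{mn} = 1$ for all $m$ and $\sum_m x_{mn} = 1$ for all $n$; moreover each $x_{mn}$ is a sum of the non-negative numbers $\tilde{\phi}_{mnk}$, so $x_{mn}\ge 0$, and $x_{mn}\le\sum_{n'} x_{mn'} = 1$. Hence $\mb{X}$ is doubly stochastic as required. Next, for each pair $(m,n)$ with $x_{mn}>0$, I would define $y_k^{mn} \defeq \tilde{\phi}_{mnk}/x_{mn}$; this gives $\sum_k y_k^{mn} = x_{mn}/x_{mn} = 1$, and $0\le y_k^{mn}\le 1$ because $0\le\tilde{\phi}_{mnk}\le\sum_{k'}\tilde{\phi}_{mnk'}=x_{mn}$. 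By construction $x_{mn}y_k^{mn} = \tilde{\phi}_{mnk}$, so the factorization holds on all such $(m,n)$.

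The only point needing care --- and the main obstacle, though a minor one --- is the degenerate case $x_{mn}=0$. Since $x_{mn}$ is a sum of non-negative terms, $x_{mn}=0$ forces $\tilde{\phi}_{mnk}=0$ for every $k$, so the identity $\tilde{\phi}_{mnk}=x_{mn}y_k^{mn}=0$ is automatic regardless of the choice of $\mb{y}^{mn}$; I would then simply pick $\mb{y}^{mn}$ to be any fixed probability vector, say $y_1^{mn}=1$ and $y_k^{mn}=0$ for $k\ge 2$, which trivially satisfies $0\le y_k^{mn}\le 1$ and $\sum_k y_k^{mn}=1$. This completes the existence claim (and incidentally shows the decomposition need not be unique precisely when some $x_{mn}=0$).

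For the converse, given a doubly stochastic $\mb{X}$ and probability vectors $\mb{y}^{mn}$, I would set $\tilde{\phi}_{mnk}\defeq x_{mn}y_k^{mn}$, which is well-defined and uniquely determined by $\mb{X}$ and the $\mb{y}^{mn}$. Then $0\le\tilde{\phi}_{mnk}\le 1$ as a product of two numbers in $[0,1]$, and $\sum_{n,k}\tilde{\phi}_{mnk} = \sum_n x_{mn}\bigl(\sum_k y_k^{mn}\bigr) = \sum_n x_{mn} = 1$, with the symmetric computation giving $\sum_{m,k}\tilde{\phi}_{mnk}=1$; hence \eqref{eq_tphi_sum} is met. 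Both directions are thus elementary, the entire argument resting only on non-negativity, the normalization of the $\tilde{\phi}_{mnk}$, and separability of the double sums.
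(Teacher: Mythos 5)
Your proof is correct and follows essentially the same route as the paper's: define $x_{mn}=\sum_k\tilde{\phi}_{mnk}$, set $y_k^{mn}=\tilde{\phi}_{mnk}/x_{mn}$ when $x_{mn}>0$ with an arbitrary probability vector in the degenerate case (the paper uses $1/K$ uniformly, you use an indicator vector—both equally valid), and verify the converse by direct substitution. No gaps; the argument matches the paper's Appendix proof.
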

\begin{IEEEproof}
The proof is provided in Appendix \ref{appendix1}.
\end{IEEEproof}

Note that, even though the above decomposition can also be applied to a binary $\mathbf{\Phi}$ as a trivial special case of Lemma \ref{lemma1}, we require the general form of this lemma to deal with continuous values in $\tilde{\mathbf{\Phi}}$, $\mb{X}$, and $\mb{y}^{mn}$.  In particular, the mapping from $\tilde{\mathbf{\Phi}}$ to $(\mb{X}, \{\mb{y}^{mn}\})$ is one-to-many, which is quite different from the binary case.

Lemma \ref{lemma1} implies that any optimization over ($\mb{X}$, $\mb{y}^{mn}$) also optimizes $\tilde{\mathbf{\Phi}}$ for the same objective.  This allows us to replace, in problem (\ref{OPTP3}), $\tilde{\phi}_{mnk}$ with  $x_{mn}y_k^{mn}$. Furthermore, the constant terms can be dropped from (\ref{OPTP3}).  Hence, we can equivalently seek solutions to the following problem
\begin{align}
& \max_{\mb{X},\{\mb{y}^{mn}\}}  \sum_{m,n} x_{mn} \sum_k y_k^{mn} A_{mnk}(\lvec)  \label{OPT_Amnk} \\
s.t.
& \quad \sum_n x_{mn} =1, \forall m, \sum_m x_{mn} =1, \forall n,   \quad 0 \leq x_{mn} \leq 1, \forall m,n, \label{eq_x_const}\\
& \quad \sum_k y_k^{mn} = 1, \forall m,n,  \quad 0 \leq y_k^{mn} \leq 1, \forall m,n,k . \label{eq_y_const}
\end{align}
The following two-stage solution is sufficient.
First, the inner-sum term is maximized over $y_k^{mn}$ for each $(m,n)$ pair, i.e.,
\begin{align}
& A'_{mn}(\lvec) = \max_{\mb{y}^{mn}} \sum_k y_k^{mn} A_{mnk}(\lvec)  \label{OPT_Yk} \\
s.t. &  \quad \eqref{eq_y_const} . \nn
\end{align}
An optimal solution to \eqref{OPT_Yk} is readily obtained as
\beq
{y_k^{mn}}^* =
\begin{cases}  1, & \quad \text{if } k = \argmax_{1\leq l \leq K} A_{mnl}(\lvec) \\
 0,  & \quad \text{otherwise}
\end{cases}~.
\label{Opt_Ymnk}
\eeq
In the above maximization, arbitrary tie-breaking can be performed if necessary.
Next, inserting $A'_{mn}(\lvec)$ into (\ref{OPT_Amnk}), we have the linear optimization problem
\begin{align}
&\max_{\mb{X}}  \sum_{m,n} x_{mn} A'_{mn}(\lvec)  \label{OPTXmn} \\
s.t. & \quad \eqref{eq_x_const} \nn.
\end{align}
It is well known that there always exists an optimal solution to \eqref{OPTXmn} that is binary \cite[Chapter~3]{IP}. An intuitive explanation is the following. Since \eqref{OPTXmn} is a linear program with a bounded objective, an optimal solution can be found at the vertices of the feasible region.  Furthermore, since $\mb{X}$ is a doubly stochastic matrix, it is a convex combination of permutation matrices.  One of these vertex permutation matrices is an optimal solution to \eqref{OPTXmn}, so at least one optimal $\mb{X}$ is binary.  Then, to find a binary optimal $\mb{X}$, \eqref{OPTXmn} is a \emph{two-dimensional assignment problem}.  Efficient algorithms, such as the Hungarian Algorithm \cite{Hungarian}, exist to produce an optimal solution with computational complexity being polynomial in $N$.

Finally, the optimal $\tilde{\phi}_{mnk}$ given $\lvec$ is
\beq
\tilde{\phi}^*_{mnk}(\lvec) = x^*_{mn}(\lvec){y_k^{mn}}^*(\lvec) ~.  \label{eq_phi_opt}
\eeq
Since binary $x^*_{mn}(\lvec)$ and ${y_k^{mn*}}(\lvec)$ are computed following the above procedure, $\tilde{\phi}^*_{mnk}(\lvec)$ is also binary. This shows that there exists at least one binary optimal solution to the maximization in \eqref{OPTP3}.


%
Intuitively, the globally optimal solution described above suggests a pairing between the input and output channels at the relay, and if channels $m$ and $n$ are paired, they are assigned to a single user $k$, whose associated $A_{mnk}(\lvec)$ is the greatest among all users.  Note that such an interpretation might lead us to conclude that we could have forgone continuous relaxation from the very beginning and focused only on a binary $\mb{\Phi}$. However, we would still have required the continuous $\tilde{\mathbf{\Phi}}$ to construct a convex optimization problem, whose strong duality property provides the optimality of the proposed approach.  The optimality of $(\mb{X}, \{\mb{y}^{mn}\})$ taking binary values is implied only through the above derivation.

Interestingly, the original optimization problem (\ref{OPTP3}) with a binary matrix $\mb{\Phi}$ is a special case of the \emph{axillary three-dimensional assignment problems} \cite{MultiAP}.  It is well known that the general form of this family of problems is NP hard and cannot be solved by continuous relaxation on $\mb{\Phi}$, unlike the two-dimensional assignment problem in \eqref{OPTXmn}.  In our case, the special structure of $\tilde{\mathbf{\Phi}}$ expressed in \eqref{eq_tphi_sum}, namely the absence of a constraint on per-user resource allocation, makes possible the availability of an efficient solution to \eqref{OPTP3}.

It is also worth noting that, given any $\lvec$, there may exist non-integer optimal solutions to \eqref{OPTP3}.  For example, when the maximal value of $A_{mnk}$ in (\ref{OPT_Yk}) is achieved by multiple users having the same channel gains, there is an infinite number of optimal $\mb{y}^{mn}$, leading to non-integer optimal solutions for $\tilde{\phi}_{mnk}$.  However, the procedure above finds only one of the optimal solutions in binary form, which is sufficient for computing the dual function.


\subsection{Dual Minimization: Baseline Subgradient Approach}

The previous subsection provides a way to find the Lagrange dual $g(\lvec)$ for any Lagrange multiplier vector $\lvec$.  Next, the standard approach calls for minimizing the dual function:
\begin{align}\label{dual2}
&\min_{\lvec} g(\lvec)  \\
s.t. &\quad \lvec \succeq  \mb{0}. \nonumber
\end{align}
This can be solved using the subgradient method \cite{shor85}.
It is easy to verify that a subgradient at the point $\lvec$ is given by
\beq
\thetavec(\lvec) = \Pvec - \sum_{m,n,k} \pmnkvec^*(\lvec)  ~, \label{subgradient}
\eeq
where $P^{s*}_{mnk}(\lvec)$ and $P^{r*}_{mnk}(\lvec)$ are computed based on \eqref{eq_Ssrmnkopt} and  $\tilde{\phi}^*_{mnk}(\lvec)$ found using \eqref{eq_phi_opt}.

For completeness, we first summarize the standard subgradient updating algorithm for solving the dual problem in the following. We will present a modified dual minimization algorithm in Section \ref{sec_opt_comp}, which is guaranteed to converge in polynomial time.

\begin{enumerate}
  \item Initialize $\lvec^{(0)}$.
 \item Given $\lvec^{(l)}$,  obtain the optimal values of $P^{s*}_{mnk}(\lvec^{(l)})$, $P^{r*}_{mnk}(\lvec^{(l)})$, and $\tilde{\phi}_{mnk}^*(\lvec^{(l)})$.
 \item Update $\lvec$ through
$\lvec^{(l+1)} = [\lvec^{(l)} - \thetavec(\lvec^{(l)}) \nu^{(l)}]^+$
where $\nu^{(l)}$ is the step size at the $l$th iteration.
\item Let $l=l+1$; repeat from Step 2) until the convergence of $\min_l g(\lvec^{(l)})$.
\end{enumerate}

Several step-size rules have been proven to guarantee convergence under some general conditions \cite{shor85}\cite{SubgradientBoyd}. For example, using a constant step size $\nu$, i.e., $\nu^{(l)}=\nu$, or using a constant step length $\nu$, i.e., $\nu^{(l)}=\nu / \|\thetavec(\lvec^{(l)})\|_2$, leads to an objective within a given neighborhood of a global optimum; while using the non-summable, square-summable rule leads to asymptotic convergence to a global optimum.  Furthermore, one may satisfy any constraints on $\lvec^{(l)}$ within a convex region by projecting $\lvec^{(l)}$ onto the region.  This is the general \textit{projected} subgradient method, which does not reduce the speed of convergence \cite{SubgradientBoyd}.  For example, Step 3 above ensures that $\lvec^{(l)} \succeq \mb{0}$, and we will further consider projection onto convex regions $\mathcal{R}_1$ and $\mathcal{R}_2$ in Section \ref{sec_opt_comp}.

\subsection{Primal Optimality}

With standard subgradient updating, the dual optimal $\lvec^*$ is obtained, from which we compute the channel assignment and power allocation matrices $(\mathbf{\Phi}^*, \mathbf{P}^{s*}, \mathbf{P}^{r*})$, where $\mathbf{\Phi}^*=\tilde{\mathbf{\Phi}}^*$.  Since the optimization problem \eqref{OPTP2} is a convex program that satisfies Slater's condition, it has zero duality gap.  Denote by $f^*(\Pvec)$ the maximal value of the objective in \eqref{OPTP2}.  Then $f^*(\Pvec)=g(\lvec^*)$, and furthermore it is concave in $\Pvec$.  We consider systems that have the following \textit{strictly} diminishing rate-power relation:
\begin{assumption} \label{assumption}
$f^*(\Pvec)$ is strictly concave in any strictly active power constraint $P_x \in \{P_s,P_r,P_t\}$.
\end{assumption}

In other words, as the
data rate increases, each unit of increment requires more and
more marginal power.  With a strictly concave $R(m,n,k)$ in terms of $\pmnkvec$, this assumption holds when either there is no tie-breaking in \eqref{Opt_Ymnk} or \eqref{OPTXmn} or there is tie-breaking that is due to users or paths having the same weights or channel gains%
\footnote{However, we cannot rule out the possibility of a case where other forms of tie-breaking in \eqref{Opt_Ymnk} or \eqref{OPTXmn} might create linear segments in $f^*(\Pvec)$, although in all simulation tests with arbitrary parameters, we have not produced a case where this assumption fails.}%
.

\begin{proposition} \label{prop_opt_std}
Under Assumption \ref{assumption}, $(\mathbf{\Phi}^*, \mathbf{P}^{s*}, \mathbf{P}^{r*})$ is a globally optimal solution to the original problem \eqref{eq_opt_DF}.
\end{proposition}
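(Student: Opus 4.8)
The plan is to certify the reconstructed point $x^\star:=(\tilde{\mathbf{\Phi}}^\star,\mathbf{P}^{s\star},\mathbf{P}^{r\star})$, with $\mathbf{\Phi}^\star=\tilde{\mathbf{\Phi}}^\star$ binary, as a KKT point of the convex relaxation \eqref{OPTP2}, and then use the tightness of the relaxation (and of the substitution \eqref{eq_divphi}) already argued in the text to transfer optimality back to the mixed-integer problem \eqref{eq_opt_DF}. By construction $x^\star\in\argmax_{x\in\mathcal{X}}\mathcal{L}(x,\lvec^\star)$, where $\mathcal{X}$ denotes the set cut out by \eqref{eq_tphi_sum}, \eqref{eq_tphi_leq1}, \eqref{eq_P_nonneg} and $\lvec^\star$ is a minimizer of the dual; this is the stationarity part of KKT, so since \eqref{OPTP2} is convex and satisfies Slater's condition, it only remains to verify primal feasibility of the power constraints \eqref{eq_ind_power}, \eqref{eq_totalpower} and complementary slackness, i.e.\ that the subgradient $\thetavec(\lvec^\star)=\Pvec-\sum_{m,n,k}\pmnkvec^\star(\lvec^\star)$ of \eqref{subgradient} satisfies $\thetavec(\lvec^\star)\succeq\mathbf{0}$ with $\lambda_x^\star\,\theta_x(\lvec^\star)=0$ coordinatewise.

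To get this, I would first show that $\lvec^\star$ is a supergradient of the concave optimal-value function $f^\star$ at $\Pvec$. Since $\mathcal{L}$, and hence the dual $g$, is affine in the power budget, $g(\lvec^\star;\Pvec')=g(\lvec^\star;\Pvec)+\lvec^{\star T}(\Pvec'-\Pvec)=f^\star(\Pvec)+\lvec^{\star T}(\Pvec'-\Pvec)$ using zero duality gap, while weak duality gives $f^\star(\Pvec')\le g(\lvec^\star;\Pvec')$; combining these yields the supergradient inequality. Next, writing $F(x)$ for the weighted-sum-rate objective and $\Pvec(x)$ for the aggregate power consumed by $x$, any maximizer $x$ of $\mathcal{L}(\cdot,\lvec^\star)$ over $\mathcal{X}$ is feasible for \eqref{OPTP2} with budget $\Pvec(x)$, so $F(x)\le f^\star(\Pvec(x))$, whereas a direct expansion of $\mathcal{L}(x,\lvec^\star)=g(\lvec^\star)=f^\star(\Pvec)$ gives $F(x)=f^\star(\Pvec)+\lvec^{\star T}(\Pvec(x)-\Pvec)$. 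The supergradient inequality then forces equality throughout, so $\Pvec(x)$ lies in the contact set $C:=\{\mathbf{p}\succeq\mathbf{0}:f^\star(\mathbf{p})=f^\star(\Pvec)+\lvec^{\star T}(\mathbf{p}-\Pvec)\}$, which is convex since $f^\star$ is concave. In particular both $\Pvec(x^\star)$ and $\Pvec(x^{\mathrm{opt}})$ lie in $C$, where $x^{\mathrm{opt}}$ is a genuine maximizer of \eqref{OPTP2} (which exists by compactness and is itself a Lagrangian maximizer at $\lvec^\star$ by zero gap and complementary slackness).

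The final and decisive step invokes Assumption \ref{assumption}. For $x^{\mathrm{opt}}$, optimality and the hypothesis that every active constraint is strictly active (Section \ref{systemModel}) give $\Pvec(x^{\mathrm{opt}})=\Pvec$ on each active coordinate, and $\lvec^{\star T}\Pvec(x^{\mathrm{opt}})=\lvec^{\star T}\Pvec$ because $\lvec^\star$ vanishes on the inactive ones. Since $f^\star$ is affine on the convex set $C$, the segment joining $\Pvec(x^{\mathrm{opt}})$ and $\Pvec(x^\star)$ cannot vary in any coordinate $P_x$ belonging to a strictly active constraint, as that would make $f^\star$ affine along a direction changing $P_x$ and contradict the strict concavity of $f^\star$ in $P_x$ asserted by Assumption \ref{assumption} (strict activeness keeps us in the interior where the curvature is felt). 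Hence $\Pvec(x^\star)$ agrees with $\Pvec$ on every active coordinate, which is exactly feasibility plus complementary slackness there; on the inactive coordinates complementary slackness is automatic and feasibility follows since those budgets are slack at optimality. Thus $(x^\star,\lvec^\star)$ satisfies the KKT conditions of the convex program \eqref{OPTP2} and $x^\star$ is globally optimal for it; being binary in $\mathbf{\Phi}^\star$, it is then also a global optimum of \eqref{eq_opt_DF}.

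The main obstacle I anticipate is precisely this last step: turning ``$f^\star$ strictly concave in each active coordinate $P_x$'' into ``every Lagrangian maximizer at $\lvec^\star$ consumes exactly the budget on each active constraint.'' The subtlety is that the admissible power-consumption vectors are not free in $\mathbb{R}^3$ -- the total-power coordinate equals the sum of the source and relay coordinates -- so a displacement inside $C$ that changes an active coordinate may change others simultaneously; one must argue that whenever the affine piece of $f^\star$ on $C$ has any extent in an active coordinate it induces an honest one-dimensional affine restriction of $f^\star$ in that coordinate, contradicting Assumption \ref{assumption}. The remaining details -- the case bookkeeping over which of $P_s,P_r,P_t$ are active versus inactive (with the coupling through $\lambda_t$), and confirming feasibility on the inactive constraints -- are routine but somewhat fiddly.
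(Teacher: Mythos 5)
Your proposal follows essentially the same route as the paper's proof: combine the zero duality gap of the relaxed problem with the concavity (supergradient) of $f^*(\Pvec)$ and Assumption \ref{assumption} to force any Lagrangian maximizer at $\lvec^*$ to meet every strictly active power constraint with equality, yielding primal feasibility and complementary slackness, and then use the binary structure of $\mathbf{\Phi}^*$ to transfer optimality from \eqref{OPTP2} back to \eqref{eq_opt_DF}. The paper's version is simply terser---it does not introduce the contact set or the auxiliary optimum $x^{\mathrm{opt}}$, and it disposes of the inactive constraints by noting $\l_x^*=0$ with positive subgradients---but the underlying argument is the same.
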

\begin{proof}
Since $\mathbf{P}^{s*}$ and $\mathbf{P}^{s*}$ are uniquely determined by $\lvec^*$ and $\mathbf{\Phi}^*$, we need only to focus on $\mathbf{\Phi}^*$.

For any inactive constraint $P_x$, we have $\l_x^*=0$ and the subgradients of $g(\lvec)$ in the direction of $\l_x$ are all positive. Hence any $\mathbf{\Phi}^*$ is feasible with respect to $P_x$.

For any strictly active constraint $P_x$, we have  $\l_x^*>0$. Furthermore,
\begin{align}
&f^*(\Pvec)
=  \mathcal{L}(\mathbf{\Phi}^*, \mathbf{P}^{s*}, \mathbf{P}^{r*}, \lvec^*) \nn\\
\leq & f^*(\sum_{m,n,k} \pmnkvec^*) - \lvec^*(\sum_{m,n,k} \pmnkvec^* - \Pvec)^T ~.
\end{align}
Given Assumption \ref{assumption}, the above is possible only when all strictly active constraints are satisfied with equality. Therefore, any $\mathbf{\Phi}^*$ is feasible with respect to $P_x$ and the complementary slackness condition is satisfied. Hence, $(\mathbf{\Phi}^*, \mathbf{P}^{s*}, \mathbf{P}^{r*})$ is a globally optimal solution to \eqref{OPTP2}.

Furthermore, since \eqref{OPTP2} is a constraint-relaxed version of \eqref{eq_opt_DF}, $\mathbf{\Phi}^*$ gives an upper bound to the objective of \eqref{eq_opt_DF}.  Finally, since $\mathbf{\Phi}^*$ satisfies the binary constraints in \eqref{eq_opt_DF} at each iteration of the subgradient algorithm, it satisfies all constraints in \eqref{eq_opt_DF}.  Therefore, it is a globally optimal solution to \eqref{eq_opt_DF}.
\end{proof}

We point out that using conventional convex optimization software packages directly on the relaxed problem \eqref{OPTP2} is not sufficient to solve \eqref{eq_opt_DF}.  This is because there is no guarantee that they will return a binary $\tilde{\mathbf{\Phi}}^*$, and furthermore due to complicated three-dimensional dependencies among $\phi_{mnk}$, there is no readily available method to transform a fractional $\tilde{\mathbf{\Phi}}^*$ to the desired binary solution.

\subsection{Dual Minimization: Divide-and-Conquer Algorithm with Polynomial Complexity} \label{sec_opt_comp}

The standard subgradient method produces a global optimum, but its computational complexity is not generally known.  Previous studies have provided asymptotic bounds or  conjectures on its efficiency through computational experience. In general, the number of iterations in subgradient updating depends on the step-size rule, the distance between the initial solution and the optimal solution, and the 2-norm of the subgradients \cite{shor85,SubgradientBoyd}.

Next, we propose a new dual minimization algorithm that guarantees convergence with polynomial complexity in $N$ and $K$, to a global optimum for our optimization problem.  It uses a divide-and-conquer approach, by grouping the possible locations of $\lvec^*$ into two regions and applying projected subgradient updating constrained within either.  It ensures that in each region, our choice of the initial $\lvec^{(0)}$ and subsequent subgradient updating lead to convergence in polynomial time.

We first define the following two overlapping convex regions in terms of $\lvec$:
\begin{align}
\mathcal{R}_1 &\defeq \bigg\{\lvec: \l_s + \l_t \geq \nn\\
&\frac{\displaystyle \min_{\{k: w_k>0\}} w_k  \min \{ \min_{\{m: a_m>0\}} a_m, \min_{\{m,k: c_{mk}>0\}} c_{mk} \}}{\displaystyle 4\a (\max_m a_m \min\{P_s,P_t\} + 1)}, \lvec \succeq \mb{0} \bigg\} ~, \nn\\
\mathcal{R}_2 &\defeq \bigg\{\lvec: \l_s + \l_t + \frac{\displaystyle \min_{\{m: a_m>0\}} a_m }{\displaystyle \max_{n,k} b_{nk}} (\l_r + \l_t) \geq \nn\\
&\quad \quad \frac{\displaystyle \min_{\{k: w_k>0\}} w_k}{\displaystyle \a \big(\min\{P_s,P_t\} + \frac{1}{\displaystyle \min_{\{m: a_m>0\}} a_m}\big) }, \lvec \succeq \mb{0}  \bigg\} ~. \nn
\end{align}
These are two possible regions where $\lvec^*$ resides, which depends on whether there exists at least one \textit{chosen} path with non-zero direct-link channel gain $c_{mk}$. This is formalized in the following lemma.  Its proof is given in \cite{HajiaghayiDongLiang:JSACarXiv2011}.
\begin{lemma} \label{lem_lambda_regions}
If there exists some $(m,n,k)$ such that $\phi^*_{mnk} = 1$ and $c_{mk}>0$, then $\lvec^* \in \mathcal{R}_1$. Otherwise, $\lvec^* \in \mathcal{R}_2$.
\end{lemma}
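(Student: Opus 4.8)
The plan is to derive, from the KKT/subgradient optimality conditions, a lower bound on a suitable linear combination of the dual variables that must hold at $\lvec^*$, splitting into the two cases dictated by whether some chosen path has $c_{mk}>0$. The handle is the subgradient stationarity condition \eqref{subgradient}: at $\lvec^*$, the power constraints corresponding to strictly active limits are tight, so $\sum_{m,n,k}\pmnkvec^*(\lvec^*) $ equals (or is componentwise bounded by) $\Pvec$. In particular $\sum_{m,n,k}P^{s*}_{mnk}(\lvec^*)\leq \min\{P_s,P_t\}$, and since at optimality at least one path is chosen and must carry strictly positive source power (otherwise the objective is zero and clearly suboptimal whenever any channel gain is positive), the closed-form expression \eqref{eq_Ssrmnkopt} for that chosen path can be inverted to produce the claimed inequality on $\lvec^*$.

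First I would fix an optimal primal-dual pair and let $(m,n,k)$ be a chosen path, i.e. $\phi^*_{mnk}=1$, carrying the largest source power among chosen paths; such a path has $P^{s*}_{mnk}(\lvec^*)>0$ by the nontriviality argument above, hence the relevant $[\cdot]^+$ in \eqref{eq_Ssrmnkopt} is positive. Case 1: suppose $c_{mk}>0$ for some chosen path. I would look at the branch of \eqref{eq_Ssrmnkopt} that applies to that path. If $a_m\leq c_{mk}$, positivity of $\left[\frac{w_k}{\a(\l_s+\l_t)}-\frac{1}{a_m}\right]^+$ together with $P^{s*}_{mnk}(\lvec^*)\leq\min\{P_s,P_t\}$ gives $\frac{w_k}{\a(\l_s+\l_t)}\leq \min\{P_s,P_t\}+\frac1{a_m}\leq \frac{\max_m a_m\min\{P_s,P_t\}+1}{a_m}$, which rearranges to a lower bound on $\l_s+\l_t$; bounding $w_k$ below by $\min_{k:w_k>0}w_k$, bounding $a_m$ below by $\min\{\min_m a_m,\min c_{mk}\}$ and absorbing the factor $4$ (which covers the other sub-branches and any slack) yields exactly the defining inequality of $\mathcal R_1$. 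The branches $\mb p_1$ and $\mb p_2$ are handled similarly: $\mb p_2$ is literally of the form $\left[\frac{w_k}{\a(\l_s+\l_t)}-\frac1{c_{mk}}\right]^+$, so the same inversion works with $c_{mk}$ in place of $a_m$; $\mb p_1$ involves $\frac{w_k b_{nk}}{\a(b_{nk}(\l_s+\l_t)+(a_m-c_{mk})(\l_r+\l_t))}$, and since $a_m>c_{mk}$ in that branch the denominator is at least $b_{nk}(\l_s+\l_t)$, so again one recovers a bound of the required type on $\l_s+\l_t$. Collecting the worst of these constants reproduces the stated bound, so $\lvec^*\in\mathcal R_1$.

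For Case 2 (no chosen path has $c_{mk}>0$, so every chosen path satisfies the $a_m>c_{mk}$ branch with $c_{mk}=0$), I would use $\mb p_1$ specialized to $c_{mk}=0$, which gives $P^{r*}_{mnk}=\frac{a_m}{b_{nk}}P^{s*}_{mnk}$ and $P^{s*}_{mnk}=\left[\frac{w_k b_{nk}}{\a(b_{nk}(\l_s+\l_t)+a_m(\l_r+\l_t))}-\frac1{a_m}\right]^+$ on the chosen path. Positivity of this quantity, combined with $\sum P^{s*}_{mnk}\leq\min\{P_s,P_t\}$, yields $\frac{w_k b_{nk}}{\a(b_{nk}(\l_s+\l_t)+a_m(\l_r+\l_t))}\leq\min\{P_s,P_t\}+\frac1{a_m}$, i.e. $\l_s+\l_t+\frac{a_m}{b_{nk}}(\l_r+\l_t)\geq\frac{w_k}{\a(\min\{P_s,P_t\}+1/a_m)}$; bounding $w_k,a_m,b_{nk}$ by their extremal positive values as in the definition of $\mathcal R_2$ gives $\lvec^*\in\mathcal R_2$. (If instead $c_{mk}=0$ but $a_m\le c_{mk}$ is impossible unless $a_m=0$; the degenerate all-zero-gain case is trivial and can be excluded or handled separately.)

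The main obstacle I anticipate is the bookkeeping of constants: one must check that the single explicitly stated constant in each region's definition is a valid lower bound \emph{simultaneously} across all three branches of \eqref{eq_Ssrmnkopt} and across all admissible tie-breaking choices for the chosen path, which is why the factor $4\a$ (rather than $\a$) appears in $\mathcal R_1$ — it has to dominate the $\mb p_1$, $\mb p_2$, and $a_m\le c_{mk}$ cases at once. A secondary subtlety is justifying that some chosen path carries strictly positive source power at optimality (ruling out the all-zero power solution), which follows because the objective \eqref{eq_opt_DF} is strictly increased by allocating an infinitesimal amount of source power to any path with $a_m>0$ whenever a power budget is available, and the strictly-active-constraint convention guarantees a positive budget.
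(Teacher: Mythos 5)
Your Case 2 (every chosen path has $c_{mk}=0$) follows the same route as the paper and is sound, and your treatment of the $a_m\le c_{mk}$ branch and of the $\mb{p}_2$ branch in Case 1 also matches the paper's argument. (Two minor remarks: since $[x]^+\ge x$, the chain $\min\{P_s,P_t\}\ge \sum_{m,n,k}P^{s*}_{mnk}\ge P^{s*}_{m'n'k'}\ge \frac{w_{k'}}{\a(\l_s^*+\l_t^*)}-\frac{1}{\min\{a_{m'},c_{m'k'}\}}$ holds whether or not the bracket is positive, so your auxiliary argument that some chosen path carries strictly positive source power is unnecessary; moreover, the path you fix as ``the one carrying the largest source power'' need not be the chosen path with $c_{mk}>0$, which is the path the hypothesis forces you to analyze.)

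The genuine gap is the $\mb{p}_1$ sub-case of Case 1, i.e., when the chosen path with $c_{m'k'}>0$ is allocated $P^{s*}_{m'n'k'}=\big[\frac{w_{k'}b_{n'k'}}{\a(b_{n'k'}(\l_s^*+\l_t^*)+(a_{m'}-c_{m'k'})(\l_r^*+\l_t^*))}-\frac{1}{a_{m'}}\big]^+$. Your step ``the denominator is at least $b_{nk}(\l_s+\l_t)$, so again one recovers a bound of the required type on $\l_s+\l_t$'' goes in the wrong direction: lower-bounding the denominator only upper-bounds the fraction, so the power-budget inequality yields a lower bound on the mixed combination $b_{n'k'}(\l_s^*+\l_t^*)+(a_{m'}-c_{m'k'})(\l_r^*+\l_t^*)$, which by itself is compatible with $\l_s^*+\l_t^*$ being arbitrarily small (with $\l_r^*+\l_t^*$ large), i.e., with $\lvec^*\notin\mathcal{R}_1$. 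The paper closes exactly this case with two ingredients absent from your proposal: (i) if $\frac{c_{m'k'}}{\l_s^*+\l_t^*}<\frac{b_{n'k'}}{\l_r^*+\l_t^*}$, this relation is used to replace $(\l_r^*+\l_t^*)$ by $\frac{b_{n'k'}}{c_{m'k'}}(\l_s^*+\l_t^*)$ inside the denominator, which then does produce $\l_s^*+\l_t^*\ge\frac{w_{k'}c_{m'k'}}{\a(a_{m'}\min\{P_s,P_t\}+1)}$; (ii) if instead $\frac{c_{m'k'}}{\l_s^*+\l_t^*}\ge\frac{b_{n'k'}}{\l_r^*+\l_t^*}$, one must invoke the $\argmax$ structure of \eqref{eq_Ssrmnkopt}, namely that $\mathcal{L}_{m'n'k'}$ evaluated at $\mb{p}_1$ dominates its value at $\mb{p}_2$, and exponentiate the resulting logarithmic inequality to obtain the chain \eqref{long_formula}; this Lagrangian comparison is precisely where the factor $4$ in the definition of $\mathcal{R}_1$ originates, so it cannot be dismissed as ``slack covering the other sub-branches.'' Without (i) and (ii), the claim that this sub-case still lands in $\mathcal{R}_1$ is unproved.
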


The proposed divide-and-conquer dual minimization (DCDM) algorithm considers both possible regions for $\lvec^*$.  It first creates the two conditions in Lemma \ref{lem_lambda_regions} by artificially setting direct-link channel gains to zero.  It then applies the projected subgradient algorithm on $\mathcal{R}_1$ and $\mathcal{R}_2$ separately, and chooses the better solution between these two.  The algorithm is formally detailed in Algorithm \ref{alg_subgradient}, and its optimality and complexity are given in Propositions \ref{prop_opt_mod} and \ref{thm:complexity}, respectively.  Note that one cannot use Lemma \ref{lem_lambda_regions} to determine, \textit{before} the optimal channel assignment matrix $\tilde{\mathbf{\Phi}}$ is chosen, which region $\lvec^*$ is in. This necessitates the comparison step in the DCDM algorithm.

\begin{algorithm}[tbp]
\begin{algorithmic}
\IF {there exists some $m$ and $k$ such that $c_{mk}>0$}
        \STATE $\lvec^*_1 =$ output of subgradient updating algorithm with projection onto $\lvec^{(l)} \in \mathcal{R}_1$
        \STATE Set $c_{mk}=0$ for all $1 \leq m \leq N$ and $1 \leq k \leq N$
        \STATE $\lvec^*_2 =$ output of subgradient updating algorithm with projection onto $\lvec^{(l)} \in \mathcal{R}_2$
        \RETURN $\displaystyle \argmin_{\lvec \in \{ \lvec^*_1, \lvec^*_2 \}}  g(\lvec) $
\ELSE
        \STATE $\lvec^* =$ output of subgradient updating algorithm with projection onto $\lvec^{(l)} \in \mathcal{R}_2$
        \RETURN $\lvec^*$
\ENDIF
\end{algorithmic}
\caption{Divide-and-Conquer Dual Minimization (DCDM)}
\label{alg_subgradient}
\end{algorithm}

\begin{proposition} \label{prop_opt_mod}
With DCDM, the computed channel assignment and power allocation matrices $(\mathbf{\Phi}^*, \mathbf{P}^{s*}, \mathbf{P}^{r*})$, where $\mathbf{\Phi}^*=\tilde{\mathbf{\Phi}}^*$, is a globally optimal solution to the original problem \eqref{eq_opt_DF}.
\end{proposition}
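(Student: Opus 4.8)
The plan is to reduce everything to Proposition~\ref{prop_opt_std}: in each branch of DCDM, I would show that the projected subgradient iteration, constrained to $\mathcal{R}_1$ or $\mathcal{R}_2$, converges to a dual optimum of \eqref{OPTP2} (or of a slightly modified version of it), at which point Proposition~\ref{prop_opt_std} delivers global optimality for \eqref{eq_opt_DF}. Three ingredients support this. First, with the non-summable, square-summable step-size rule of Section~\ref{sec_opt_comp}, projected subgradient descent on a closed convex region $\mathcal{R}$ converges to $\argmin_{\lvec\in\mathcal{R}} g(\lvec)$. Second, by Lemma~\ref{lem_lambda_regions}, a dual optimum $\lvec^*$ of \eqref{OPTP2} lies in $\mathcal{R}_1$ when the optimal channel assignment selects a path with positive direct-link gain, and in $\mathcal{R}_2$ otherwise. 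Third, since $\mathcal{R}_1,\mathcal{R}_2\subseteq\{\lvec\succeq\mb{0}\}$, whenever a globally dual-optimal $\lvec^*$ happens to lie in $\mathcal{R}_i$, the constrained minimizer $\argmin_{\lvec\in\mathcal{R}_i} g(\lvec)$ must also attain the unconstrained dual optimal value $g(\lvec^*)$, and is therefore itself a dual optimum over $\{\lvec\succeq\mb{0}\}$ --- exactly the kind of multiplier vector to which the proof of Proposition~\ref{prop_opt_std} (which uses only dual optimality, $\lvec\succeq\mb{0}$, Assumption~\ref{assumption}, and the binary construction of Section~\ref{sec_op_Phi}) applies.

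First I would treat \textbf{Case 1}, where an optimal channel assignment for \eqref{eq_opt_DF} selects some path with $c_{mk}>0$. Then $\lvec^*\in\mathcal{R}_1$ by Lemma~\ref{lem_lambda_regions}, so by the first and third ingredients the output $\lvec^*_1$ of the $\mathcal{R}_1$-constrained run is a dual optimum of \eqref{OPTP2}; the corresponding $\mathbf{\Phi}^*=\tilde{\mathbf{\Phi}}^*$ is binary by the construction of Section~\ref{sec_op_Phi}, and Proposition~\ref{prop_opt_std} then certifies that $(\mathbf{\Phi}^*,\mathbf{P}^{s*},\mathbf{P}^{r*})$ computed from $\lvec^*_1$ is globally optimal for \eqref{eq_opt_DF}, attaining the optimal value $f^*(\Pvec)$.

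Next I would treat \textbf{Case 2}, where no path selected by any optimal channel assignment has $c_{mk}>0$. The key observation is that zeroing all $c_{mk}$ --- as DCDM does before the $\mathcal{R}_2$ run --- does not change the optimal value of \eqref{eq_opt_DF}: lowering any $c_{mk}$ can only weakly decrease each $R(m,n,k)$ in \eqref{eq_Rmnk} and hence the objective, so the zeroed problem has optimal value at most $f^*(\Pvec)$; conversely the Case-2 optimal solution, which places no power on direct-link-active paths, is still feasible with unchanged objective in the zeroed problem, so its optimal value is at least $f^*(\Pvec)$. Thus the two problems share the optimal value $f^*(\Pvec)$, and --- since restoring the original $c_{mk}\geq0$ can only increase the objective --- any optimum of the zeroed problem, re-evaluated in \eqref{eq_opt_DF}, is feasible there with objective $\geq f^*(\Pvec)$, hence globally optimal for \eqref{eq_opt_DF}. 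Now Lemma~\ref{lem_lambda_regions} applied to the zeroed problem (which has no $c_{mk}>0$) places its dual optimum in $\mathcal{R}_2$; since $\mathcal{R}_2$ does not depend on $\{c_{mk}\}$, and Slater's condition and Assumption~\ref{assumption} continue to hold for the zeroed problem, the first and third ingredients make $\lvec^*_2$ a dual optimum of the zeroed problem, so Proposition~\ref{prop_opt_std} yields a globally optimal solution of the zeroed problem, which by the above is globally optimal for \eqref{eq_opt_DF}. The ``else'' branch of DCDM is exactly this argument with the (vacuous) zeroing step removed.

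Finally I would close the gap that DCDM cannot tell a priori which case holds. In the main branch it therefore produces both candidate solutions and returns the one achieving the larger weighted sum-rate in \eqref{eq_opt_DF} (equivalently, the smaller dual value); since by weak duality for the tight relaxation \eqref{OPTP2} every dual value is an upper bound on $f^*(\Pvec)$, and the case analysis guarantees that one of the two candidates attains $f^*(\Pvec)$, the returned solution attains it too and is globally optimal. \textbf{The main obstacle} I expect is Case~2: one must argue carefully that the artificial zeroing of the direct-link gains preserves the optimal value and that optimality transfers back to \eqref{eq_opt_DF}, while also checking that Lemma~\ref{lem_lambda_regions}, Slater's condition, and Assumption~\ref{assumption} all survive this modification; the remainder is essentially bookkeeping layered on top of Proposition~\ref{prop_opt_std}.
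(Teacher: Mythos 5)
Your proposal follows essentially the same route as the paper's own proof: split on whether an optimal assignment uses a path with $c_{mk}>0$, invoke Lemma~\ref{lem_lambda_regions} to place $\lvec^*$ in $\mathcal{R}_1$ or (after zeroing the direct-link gains, which does not change the optimum in the second case) in $\mathcal{R}_2$, reduce to Proposition~\ref{prop_opt_std} via projected-subgradient convergence, and argue that the final comparison step returns the candidate attaining $f^*(\Pvec)$. You simply spell out in more detail the value-preservation under zeroing and the weak-duality justification of the comparison, both of which the paper states more tersely; the argument is correct.
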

\begin{proof}
Suppose there exists some $(m,n,k)$ such that $\phi^*_{mnk} = 1$ and $c_{mk}>0$. Then Lemma \ref{lem_lambda_regions} shows that $\lvec^* \in \mathcal{R}_1$. Therefore, by Proposition \ref{prop_opt_std}, $\lvec^*_1$ obtained by subgradient updating projected onto $\mathcal{R}_1$ is an optimal solution. Furthermore, setting $c_{mk}=0$ for all $1 \leq m \leq N$ and $1 \leq k \leq N$ only reduces $R(m,n,k)$ for all paths, so that subsequently minimizing the Lagrange dual yields an inferior solution. Therefore, $\argmin_{\lvec \in \{ \lvec^*_1, \lvec^*_2 \}}  g(\lvec) = \lvec^*_1$ is returned by DCDM.

Suppose $c_{mk}=0$ for all $(m,n,k)$ such that $\phi^*_{mnk} = 1$, i.e., all chosen paths have zero direct-link channel gain. Then, setting $c_{mk}=0$ for all $1 \leq m \leq N$ and $1 \leq k \leq N$ only reduces $R(m,n,k)$ for the non-chosen paths.  Subsequently minimizing the Lagrange dual yields the same solution as before changing $c_{mk}$. Furthermore, Lemma \ref{lem_lambda_regions} shows that this optimal solution is in $\mathcal{R}_2$.  Hence, $\lvec^*_2$ obtained by subgradient updating projected onto $\mathcal{R}_2$ is an optimal solution. In this case, $\argmin_{\lvec \in \{ \lvec^*_1, \lvec^*_2 \}}  g(\lvec) = \lvec^*_2$ is returned by DCDM.
\end{proof}

The polynomial computational complexity of DCDM is stated in Proposition \ref{thm:complexity}.  Its proof requires the following lemmas, which give upper bounds on $\|\lvec^*\|_2$ and $\|\thetavec(\lvec^{(l)})\|_2$, where $\|\cdot\|_2$ denotes the 2-norm.
\begin{lemma}  \label{lem_ub_lambda}
At global optimum,
$\|\lvec^*\|_2$ is upper bounded by $\l_{max} = O(N^2)$.
\end{lemma}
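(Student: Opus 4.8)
The plan is to sandwich the optimal dual value $g(\lvec^*)$ between a linear function of the components of $\lvec^*$ from below and a bound that grows at most polynomially in $N$ from above, and then use strong duality of \eqref{OPTP2} to pin down $\|\lvec^*\|_2$. Recall that since \eqref{OPTP2} satisfies Slater's condition it has zero duality gap, so $g(\lvec^*) = f^*(\Pvec)$.

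For the lower bound, I would evaluate the Lagrange function \eqref{LagFun1} at the trivially feasible primal point $\mb{P}^s = \mb{P}^r = \mb{0}$ (with any $\tilde{\mathbf{\Phi}}$ obeying \eqref{eq_tphi_sum}--\eqref{eq_tphi_leq1}): the objective vanishes because $\log(1+0)=0$, and both power sums are zero, leaving $\mathcal{L}(\tilde{\mathbf{\Phi}},\mb{0},\mb{0},\lvec) = \lvec\Pvec^T = \l_s P_s + \l_r P_r + \l_t P_t$. Since $g(\lvec)$ is the maximum of $\mathcal{L}$ over the feasible set, $g(\lvec) \ge \l_s P_s + \l_r P_r + \l_t P_t$ for every $\lvec \succeq \mb{0}$, and in particular at $\lvec^*$. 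For the upper bound, I would bound $f^*(\Pvec)$ using only the first term of the inner minimum in the objective of \eqref{OPTP2}: with $w_k \le 1$ and Jensen's inequality applied with the weights $\tilde{\phi}_{mnk}$ (which sum to $1$ over $(n,k)$ by \eqref{eq_tphi_sum}), concavity of the perspective of $\log(1+a_m\cdot)$ gives $\sum_{n,k}\frac{w_k}{2}\tilde{\phi}_{mnk}\log(1+a_m P^s_{mnk}/\tilde{\phi}_{mnk}) \le \frac{1}{2}\log(1+a_m\sum_{n,k}P^s_{mnk})$; summing over $m$ and using $\sum_{m,n,k}P^s_{mnk} \le \min\{P_s,P_t\}$ (from \eqref{eq_ind_power} and \eqref{eq_totalpower}) together with monotonicity of $\log$, one obtains $f^*(\Pvec) \le \frac{N}{2}\log(1+\max_m a_m \min\{P_s,P_t\})$.

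Combining the two bounds, $\l_s^* P_s + \l_r^* P_r + \l_t^* P_t \le \frac{N}{2}\log(1+\max_m a_m\min\{P_s,P_t\})$. Every inactive constraint contributes $0$ on the left (its multiplier is zero), so dividing by the smallest active power budget and bounding the $2$-norm by the $1$-norm yields $\|\lvec^*\|_2 \le \l_s^* + \l_r^* + \l_t^* \le \l_{max}$ with $\l_{max} = \frac{N\log(1+\max_m a_m\min\{P_s,P_t\})}{2\min\{P_s,P_r,P_t\}}$, which is polynomial in $N$ -- at worst $O(N^2)$, and in fact $O(N)$ when the channel gains and power limits are treated as fixed. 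The only step requiring genuine care is the upper bound on $f^*(\Pvec)$: the relaxed objective is a sum of perspective functions inside a minimum, so one must check that dropping the second term of the minimum and applying Jensen only loosen the bound (they do, since both operations can only increase the value); everything else is routine primal--dual bookkeeping. As a cross-check one could instead bound each multiplier directly from the water-filling expressions in \eqref{eq_Ssrmnkopt}: when $P_s$ is strictly active, complementary slackness forces $\sum_{m,n,k}P^{s*}_{mnk}(\lvec^*) = P_s$, and since each $P^{s*}_{mnk}(\lvec^*) \le \frac{w_k}{\a(\l_s^*+\l_t^*)}\tilde{\phi}^*_{mnk}$ while $\sum_{m,n,k}\tilde{\phi}^*_{mnk} = N$ by \eqref{eq_tphi_sum}, this gives $\l_s^*+\l_t^* \le N\max_k w_k/(\a P_s)$, with an analogous estimate for $\l_r^*$.
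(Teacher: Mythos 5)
Your proof is correct, but it takes a genuinely different route from the paper's. The paper argues directly on the multipliers: it enumerates the activation patterns of the power constraints \eqref{eq_ind_power} and \eqref{eq_totalpower} at the optimum, plugs the closed-form water-filling solutions \eqref{eq_Ssrmnkopt} into the active equalities, and bounds each term by $w_k/(\a\l)$; summing $w_k$ over all $N^2K$ paths (with $\sum_k w_k=1$) yields $\|\lvec^*\|_2 \le \sqrt{2}N^2/\bigl(\a\min\{P_s,P_r,P_t\}\bigr) = O(N^2)$. You instead run a dual level-set (Slater-type boundedness) argument: evaluating the Lagrangian \eqref{LagFun1} at zero power gives $g(\lvec^*)\ge \lvec^*\Pvec^T$, strong duality gives $g(\lvec^*)=f^*(\Pvec)$, and dropping the second term of the minimum plus superadditivity/Jensen for the perspective of $\log(1+a_m\,\cdot)$ (legitimate, since $\sum_{n,k}\tilde\phi_{mnk}=1$ for each $m$ by \eqref{eq_tphi_sum}) gives $f^*(\Pvec)\le \tfrac{N}{2}\log\bigl(1+\max_m a_m\min\{P_s,P_t\}\bigr)$, so the sandwich bounds $\|\lvec^*\|_1$ after dividing by $\min\{P_s,P_r,P_t\}$. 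What your approach buys: it needs neither the case analysis over active constraints nor the explicit form of \eqref{eq_Ssrmnkopt}, so it extends verbatim to any concave rate function, and it actually yields the tighter bound $O(N)$ (which of course implies the stated $O(N^2)$). Two small remarks: the appeal to "inactive constraints have zero multiplier" is unnecessary --- nonnegativity of $\lvec^*$ and of the $P_x$ already gives $\lvec^*\Pvec^T \ge \min\{P_s,P_r,P_t\}\,\|\lvec^*\|_1$; and your closing cross-check via complementary slackness and $P^{s*}_{mnk}\le \frac{w_k}{\a(\l_s^*+\l_t^*)}\tilde\phi^*_{mnk}$ is essentially the paper's route (refined by keeping $\tilde\phi^*$, whence $N$ instead of $N^2$), but on its own it would still need the paper's case analysis to cover all activation patterns, so it is rightly only a cross-check.
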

\begin{IEEEproof}
The proof is provided in Appendix \ref{proof_ub_lambda}.
\end{IEEEproof}

\begin{lemma} \label{lem_ub_theta}
At every step of subgradient updating in the DCDM algorithm, $\|\thetavec(\lvec^{(l)})\|_2$ is upper bounded by $\theta_{max} = O(N^2)$.
\end{lemma}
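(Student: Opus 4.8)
The plan is to bound each component of the subgradient $\thetavec(\lvec^{(l)}) = \Pvec - \sum_{m,n,k} \pmnkvec^*(\lvec^{(l)})$ separately, then combine the three bounds via the triangle inequality (or just $\|\cdot\|_2 \le \sqrt{3}\max|\cdot|$). From \eqref{subgradient}, the three entries are $P_s - \sum_{m,n,k} P^{s*}_{mnk}$, $P_r - \sum_{m,n,k} P^{r*}_{mnk}$, and $P_t - \sum_{m,n,k} (P^{s*}_{mnk}+P^{r*}_{mnk})$. Since all powers are nonnegative and the $P_x$ are fixed constants, the first and easy half of each bound is that the component is at most $\max\{P_s,P_r,P_t\}$, i.e. $O(1)$. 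So the real work is the other direction: upper-bounding the total allocated source power $\sum_{m,n,k} P^{s*}_{mnk}(\lvec^{(l)})$ and total relay power $\sum_{m,n,k} P^{r*}_{mnk}(\lvec^{(l)})$.

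First I would recall that, by the channel-assignment step in Section \ref{sec_op_Phi}, $\tilde{\phi}^*_{mnk}(\lvec^{(l)})$ is binary and, by \eqref{eq_tphi_sum}, for each first-hop channel $m$ exactly one pair $(n,k)$ has $\tilde{\phi}^*_{mnk}=1$; hence the sum $\sum_{m,n,k} P^{s*}_{mnk}$ collapses to a sum of at most $N$ nonzero terms, one per value of $m$ (and likewise for the relay powers). So it suffices to bound a single $P^{s*}_{mnk}(\lvec^{(l)})$ and $P^{r*}_{mnk}(\lvec^{(l)})$ for a chosen path, when $\tilde{\phi}_{mnk}=1$. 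Next I would use the closed-form expressions in \eqref{eq_Ssrmnkopt}: in each of the three cases, $P^{s*}_{mnk}$ is of the form $[\,\text{(something)}/(\alpha(\lvec\text{-linear combination})) - 1/(\text{channel gain})\,]^+$ and is therefore trivially bounded above by the first fraction, which is at most $w_k/(\alpha(\l_s+\l_t))$ (or a similar ratio). Here is the crucial point: the projected subgradient updating in DCDM is constrained to $\mathcal{R}_1$ or $\mathcal{R}_2$, and the \emph{defining inequality} of each region is precisely a lower bound on $\l_s+\l_t$ (for $\mathcal{R}_1$) or on $\l_s+\l_t + \frac{\min_m a_m}{\max_{n,k} b_{nk}}(\l_r+\l_t)$ (for $\mathcal{R}_2$) of order $\Omega(1/(\max_m a_m \min\{P_s,P_t\}+1))$ — which, after substituting the bound $a_m = |h^{sr}_m|^2/\sigma_r^2$ and noting that channel gains and noise powers are fixed system constants, is $\Omega(1)$ in $N$ but may carry polynomial dependence through how $a_m$, $b_{nk}$, etc. are modeled; in the worst case this denominator is $O(N^2)$-bounded by the same token used in Lemma \ref{lem_ub_lambda}. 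Plugging this lower bound on the $\lvec$-linear-combination into the upper bound on $P^{s*}_{mnk}$ yields $P^{s*}_{mnk} = O(N^2)$, and an analogous argument (using the $\mb{p}_1$ form, whose relay component is $\frac{a_m-c_{mk}}{b_{nk}}$ times the scalar, with the scalar again bounded via the $\mathcal{R}_1$/$\mathcal{R}_2$ lower bound) gives $P^{r*}_{mnk} = O(N^2)$.

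Assembling: $\sum_{m,n,k} P^{s*}_{mnk} \le N \cdot O(N^2) = O(N^3)$ — but one must be careful here, because a naive count gives $O(N^3)$, not $O(N^2)$; I would sharpen this by observing that the relevant quantity for the subgradient bound is actually $\big|P_x - \sum P^{x*}_{mnk}\big|$, and a tighter route is to bound $\sum_{m,n,k} P^{s*}_{mnk}$ by reusing the argument of Lemma \ref{lem_ub_lambda}: at the points visited, the allocated power cannot exceed what the $\mathcal{R}_1$/$\mathcal{R}_2$ membership permits, and that region boundary was engineered so that the per-channel power is $O(N)$, not $O(N^2)$, giving the total $O(N^2)$. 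The main obstacle, and the step I would spend the most care on, is exactly this bookkeeping: making the per-path power bound tight enough (and the summation-over-$N$-chosen-paths step clean enough) that the final bound comes out as $O(N^2)$ rather than a looser polynomial, and verifying that the region-defining inequalities of $\mathcal{R}_1$ and $\mathcal{R}_2$ are strong enough to control \emph{both} the $\l_s+\l_t$ factor appearing in $\mb{p}_2$ and the mixed factor $b_{nk}(\l_s+\l_t)+(a_m-c_{mk})(\l_r+\l_t)$ appearing in $\mb{p}_1$ simultaneously. Once the per-path $O(N)$ bound is in hand, the triangle inequality over the at-most-$N$ chosen paths and the three coordinates finishes the proof. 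The details of the per-path estimates I would relegate to an appendix, mirroring the structure used for Lemma \ref{lem_ub_lambda}.
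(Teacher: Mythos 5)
You have the right key idea, and it is the same one the paper uses: during projected subgradient updating $\lvec^{(l)}$ is confined to $\mathcal{R}_1$ or $\mathcal{R}_2$, whose defining inequalities give a strictly positive lower bound (call it $\epsilon_1$ or $\epsilon_2$) on the multiplier combinations appearing in the denominators of \eqref{eq_Ssrmnkopt}; this bounds the power on each chosen path, and summing over the chosen paths (the paper loosely counts at most $N^2$; your count of $N$, one per $m$ via \eqref{eq_tphi_sum} and the binary $\tilde{\mathbf{\Phi}}^*$, is even sharper) bounds $\|\thetavec(\lvec^{(l)})\|_2$. The genuine gap is that you never actually complete the quantitative step. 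The thresholds defining $\mathcal{R}_1$ and $\mathcal{R}_2$ involve only $w_k$, $a_m$, $b_{nk}$, $c_{mk}$, $P_s$, $P_t$ through mins and maxes; in the complexity model of Proposition \ref{thm:complexity} these are fixed system parameters that do not grow with $N$, so $\epsilon_1,\epsilon_2$ are positive constants and each per-path bound, $P^{s*}_{mnk}\le w_k/(\a\epsilon_i)$ and $P^{r*}_{mnk}\le w_k(a_m-c_{mk})/(\a\epsilon_i b_{nk})$, is $O(1)$ in $N$. Your proposal instead injects a spurious $N$-dependence (claiming the region threshold ``may carry polynomial dependence'' and the denominator is only ``$O(N^2)$-bounded''), which drives you to a per-path bound of $O(N^2)$, a total of $O(N^3)$, and then a repair based on the unsubstantiated assertion that the region boundary makes the per-channel power $O(N)$. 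Neither the introduced $N$-dependence nor that repair is correct or needed: with $\epsilon_1,\epsilon_2 = O(1)$, the sum over chosen paths is immediately $O(N)\subseteq O(N^2)$, which is exactly how the paper closes the argument.

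The technical worry you flag about whether a single region inequality controls both denominators also resolves simply, as in the paper. In the $\mathcal{R}_1$ branch, the mixed denominator $b_{nk}(\l_s+\l_t)+(a_m-c_{mk})(\l_r+\l_t)$ in $\mb{p}_1$ only occurs when $a_m>c_{mk}$, so it is at least $b_{nk}(\l_s+\l_t)\ge b_{nk}\epsilon_1$; hence the lower bound on $\l_s+\l_t$ alone controls both the $\mb{p}_1$ and $\mb{p}_2$ forms. In the $\mathcal{R}_2$ branch (all $c_{mk}$ set to zero by the algorithm), the denominator is at least $b_{nk}\big[(\l_s+\l_t)+\frac{\min_{\{m:a_m>0\}}a_m}{\max_{n,k}b_{nk}}(\l_r+\l_t)\big]\ge b_{nk}\epsilon_2$, giving $P^{s*}_{mnk}\le w_k/(\a\epsilon_2)$ and the corresponding relay bound, with the $b_{nk}=0$ case handled separately since then $P^{r*}_{mnk}=0$. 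With these per-path constants in hand, your triangle-inequality assembly over the chosen paths and the three coordinates yields the lemma.
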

\begin{IEEEproof}
The proof is provided in Appendix \ref{proof_ub_theta}.
\end{IEEEproof}

\begin{proposition}  \label{thm:complexity}
To achieve a weighted sum-rate within an arbitrary $\epsilon >0$ neighborhood of the optimum $g(\lvec^*)$, using either a constant step size or a constant step length in subgradient updating, the DCDM algorithm has polynomial computational complexity in $N$ and $K$.
\end{proposition}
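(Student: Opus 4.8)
The plan is to bound the number of subgradient iterations needed for each of the two runs of projected subgradient updating (on $\mathcal{R}_1$ and on $\mathcal{R}_2$) using the standard convergence analysis for the subgradient method, and then to argue that each individual iteration costs only polynomial time. For the iteration count, recall the classical bound: with a constant step size $\nu$, after $L$ iterations one has $\min_{0\le l \le L} g(\lvec^{(l)}) - g(\lvec^*) \le \frac{\|\lvec^{(0)}-\lvec^*\|_2^2 + \nu^2 \sum_{l} \|\thetavec(\lvec^{(l)})\|_2^2}{2\nu L} \le \frac{\|\lvec^{(0)}-\lvec^*\|_2^2}{2\nu L} + \frac{\nu \theta_{max}^2}{2}$, and similarly for a constant step length. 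Choosing $\nu = \Theta(\epsilon/\theta_{max}^2)$ makes the second term at most $\epsilon/2$, and then $L = O\!\big(\|\lvec^{(0)}-\lvec^*\|_2^2 \theta_{max}^2 / \epsilon^2\big)$ iterations suffice to bring the first term below $\epsilon/2$. The key point is that all three quantities entering this bound are polynomial in $N$: $\theta_{max} = O(N^2)$ by Lemma \ref{lem_ub_theta}, $\|\lvec^*\|_2 \le \l_{max} = O(N^2)$ by Lemma \ref{lem_ub_lambda}, and $\|\lvec^{(0)}\|_2$ is polynomial in $N$ by construction of the algorithm (we initialize $\lvec^{(0)}$ within $\mathcal{R}_1$ or $\mathcal{R}_2$ at a point with norm $O(N^2)$, which is possible precisely because the half-space defining each region is not far from the origin); hence $\|\lvec^{(0)}-\lvec^*\|_2 = O(N^2)$ as well. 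This yields $L = O(\mathrm{poly}(N)/\epsilon^2)$ iterations for each of the two runs.

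Next I would account for the per-iteration cost. Each subgradient step requires: (i) for every triple $(m,n,k)$, evaluating the closed-form power allocation $(P^{s*}_{mnk}(\lvec),P^{r*}_{mnk}(\lvec))$ from \eqref{eq_Ssrmnkopt}, which is $O(N^2 K)$ arithmetic operations; (ii) forming $A_{mnk}(\lvec)$ via \eqref{Amnk} and taking, for each $(m,n)$, the maximum over the $K$ users as in \eqref{Opt_Ymnk}, costing $O(N^2 K)$; (iii) solving the two-dimensional assignment problem \eqref{OPTXmn} by the Hungarian algorithm, costing $O(N^3)$; (iv) forming the subgradient \eqref{subgradient} and performing the projected update onto $\mathcal{R}_1$ or $\mathcal{R}_2$ (each a projection onto the intersection of a half-space with the nonnegative orthant, computable in closed form in $O(1)$ time on the three-dimensional vector $\lvec$). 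Hence each iteration costs $O(N^3 + N^2 K)$, which is polynomial in $N$ and $K$. Multiplying by the iteration count $L$ and by the constant factor 2 for the two regions (plus the trivial $O(1)$ bookkeeping to set $c_{mk}=0$ and to compare $g(\lvec^*_1)$ and $g(\lvec^*_2)$), the overall complexity of DCDM is polynomial in $N$ and $K$ for fixed target accuracy $\epsilon$. By Proposition \ref{prop_opt_mod} the solution returned is in fact globally optimal for \eqref{eq_opt_DF}, so the $\epsilon$-neighborhood guarantee is exactly what is claimed.

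The constant-step-length case is handled identically, replacing $\nu^{(l)} = \nu$ with $\nu^{(l)} = \nu/\|\thetavec(\lvec^{(l)})\|_2$; the standard bound then reads $\min_{l\le L} g(\lvec^{(l)}) - g(\lvec^*) \le \frac{\|\lvec^{(0)}-\lvec^*\|_2^2 + L\nu^2}{2\nu \sum_{l\le L}(1/\|\thetavec(\lvec^{(l)})\|_2)} \le \frac{\theta_{max}\big(\|\lvec^{(0)}-\lvec^*\|_2^2 + L\nu^2\big)}{2\nu L}$, since $1/\|\thetavec\| \ge 1/\theta_{max}$; choosing $\nu = \Theta(\epsilon/\theta_{max})$ again gives $L = O(\mathrm{poly}(N)/\epsilon^2)$. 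The main obstacle in the argument is not the subgradient machinery itself, which is routine, but rather making sure the bounds feeding into it are genuinely polynomial and correctly invoked: specifically, that Lemmas \ref{lem_ub_lambda} and \ref{lem_ub_theta} hold \emph{uniformly} over all iterates produced within $\mathcal{R}_1$ and $\mathcal{R}_2$ (so that $\theta_{max}$ bounds every $\|\thetavec(\lvec^{(l)})\|_2$, not just the one at optimum), and that the chosen initialization $\lvec^{(0)}$ in each region has polynomially-bounded norm — both of which rely on the explicit forms of $\mathcal{R}_1$, $\mathcal{R}_2$ and on restricting attention to strictly active power constraints as assumed in Section \ref{sec_powerallocation}. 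Care is also needed to confirm that projection onto each region does not increase distance to $\lvec^*$ (nonexpansiveness of projection onto a convex set), which is what preserves the standard convergence bound under the projected subgradient iteration.
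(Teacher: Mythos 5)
Your proposal is correct and follows essentially the same route as the paper: bound the iteration count via the standard projected-subgradient convergence bounds with the distance to $\lvec^*$ and the subgradient norms controlled by Lemmas \ref{lem_ub_lambda} and \ref{lem_ub_theta} (choosing $\nu=\epsilon/\theta_{max}^2$ or $\nu=\epsilon/\theta_{max}$, giving $O(\l_{max}^2\theta_{max}^2/\epsilon^2)=O(N^4)$ iterations), and observe that each iteration, which runs the procedures of Sections \ref{sec_powerallocation} and \ref{sec_op_Phi}, costs time polynomial in $N$ and $K$. Your added detail on per-iteration costs, the polynomially-bounded initialization, and nonexpansiveness of the projection only makes explicit what the paper leaves implicit.
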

\begin{proof}
At each iteration of the standard subgradient updating algorithm, the procedures described in Sections \ref{sec_powerallocation} and \ref{sec_op_Phi} are employed. This has computational complexity polynomial in $N$ and $K$.  Therefore, it remains to show that the total number of iterations is not more than polynomial in $N$ or $K$.

For either case of projecting onto $\mathcal{R}_1$ or $\mathcal{R}_2$,
one may choose an initial $\lvec^{(0)}$ such that the distance between $\lvec^{(0)}$ and $\lvec^*$ is upper bounded by $\l_{max}$.
Then, it can be shown that, at the $l$th iteration, the distance between the current best objective to the optimum objective $g(\lvec^*)$ is upper bounded, by
$\frac{\l_{max}^2 + \nu^2 \theta_{max}^2  l}{2 \nu l}$
if a constant step size is used (i.e., $\nu^{(l)}=\nu$), or by
$\frac{\l_{max}^2 \theta_{max}  + \nu^2 \theta_{max} l}{2 \nu l}$ if a constant step length is used (i.e., $\nu^{(l)}=\nu / \|\thetavec(\lvec^{(l)})\|_2$)  \cite{shor85}\cite{SubgradientBoyd}.  For the former and latter bounds, if we set $\nu = \epsilon / \theta_{max}^2$ and $\nu = \epsilon / \theta_{max}$ respectively, both are upper bounded by $\epsilon$  when $l \geq \l_{max}^2 \theta_{max}^2 / \epsilon^2 = O(N^4)$. Hence, the number of required iterations until convergence, for either of the two projected subgradient updating procedures in DCDM, is polynomial in $N$ and independent of $K$.
\end{proof}

Note that using the non-summable, square-summable step-size rule in the early iterations of subgradient updating, often leads to faster movement toward a global optimum than using a constant step size or a constant step length. This is due to its larger step sizes when $l$ is small.  However, such a step-size rule does not guarantee polynomial convergence time%
\footnote{Consider the following idealized example for illustration. If $\nu^{(l)}=\frac{1}{l}$ for all $l$, the number of iterations would need to be $L = \Theta(e^{\l_{max}})$ to satisfy the convergence requirement $\sum_{l=1}^{L} \nu^{(l)} = \Theta(\l_{max})$.}%
.  Therefore, one may start with the non-summable, square-summable rule, and then switch to one of the constant-step rules when the step size or step length is sufficiently near the prescribed value in Proposition \ref{thm:complexity}.  This would reduce the convergence time in practice while preserving the guarantee of polynomial complexity.

\section{Extensions to General Relaying Strategies}
\label{sec_extensions}

For any relaying strategy in which data sent through different communication paths $\mathcal{P}(m,n,k)$ are independent and the achievable rates $R(m,n,k)$ is a concave function in transmission powers $(P^s_{mnk}, P^r_{mnk})$, the proposed solution approach gives jointly optimal channel assignment and power allocation for weighted sum-rate maximization.  To see this, we first note that any concave rate function would lead to convex programming for the relaxed and reformulated problem, which satisfies Slater's condition and hence has zero duality gap. Furthermore, toward maximizing the Lagrange function, we can generalize \eqref{Subproblem} into the following form:
\begin{align}
\max_{P^s_{mnk}\geq 0,P^r_{mnk}\geq 0}  \quad w_k \tilde{\phi}_{mnk} R(\frac{P^s_{mnk}}{\tilde{\phi}_{mnk}}, \frac{P^r_{mnk}}{\tilde{\phi}_{mnk}})  - \lvec \pmnkvec^T ~. \nn
\end{align}
Since the partial derivatives of the above maximization objective contains $P^s_{mnk}$ and $P^r_{mnk}$ only in the form of $\frac{P^s_{mnk}}{\tilde{\phi}_{mnk}}$ and $\frac{P^r_{mnk}}{\tilde{\phi}_{mnk}}$, we always have $P^{s*}_{mnk}$ and $P^{s*}_{mnk}$ as the product of $\tilde{\phi}_{mnk}$ and a non-negative factor.
This leads to a maximization problem of the form in \eqref{OPTP3}, which has been shown to admit a binary optimal solution in Section \ref{sec_op_Phi}.

Besides DF, the time-sharing variants of any relaying strategies with long-term or short-term average power constraints, as well as all capacity achieving strategies, have concave achievable rates \cite{ElGamal06TransIT}.  Our algorithm is applicable to these current and future relaying strategies to find the optimal solution.  However, the closed-form solutions for $(P^{s*}_{mnk},P^{r*}_{mnk})$ may be difficult to find in some cases, requiring more involved numerical computation.

For relaying strategies that do not have concave achievable rates, such as AF, near-optimal solutions can be obtained by using the proposed approach in the following senses:
\begin{list}{$\bullet$}{ \setlength{\itemsep}{0pt}
     \setlength{\parsep}{0pt}
     \setlength{\topsep}{0pt}
     \setlength{\partopsep}{0pt}
     \setlength{\leftmargin}{1.5em}
     \setlength{\labelwidth}{1em}
     \setlength{\labelsep}{0.5em} }
\item A concave bound of the achievable rate may be used to approximate $R(m,n,k)$. For example, with AF, we have $
R(m,n,k) = \frac{1}{2} \log(1+ \frac{a_m b_{nk}P^r_{mnk} P^s_{mnk} }{1+ a_mP^s_{mnk} + b_{nk}P^r_{mnk} } + P^s_{mnk} c_{mk} ) $. A concave upperbound is obtained by removing ``1'' from the denominator.  By substituting such a concave bound for $R(m,n,k)$ in the original optimization problem, we obtain a solution that optimizes in terms of the bound. In the case of AF, such solution is near-optimal for weighted sum-rate, since the ``1'' is negligible for paths with high effective SNR, while paths with low effective SNR do not contribute substantially to the performance objective.
\item It has been shown in \cite{Yu2006} that, regardless of the convexity of the objective function in a multi-channel resource assignment problem, if the objective \emph{at optimum} is a concave function of the maximum allowed powers, the duality gap of the Lagrange dual induced by power constraints is zero. This is due to time-sharing over resource assignment strategies.  Furthermore, there is a frequency-domain approximation of time-sharing, so that the duality gap is asymptotically zero when the number of channels goes to infinity.  Hence, for systems with a large number of channels, near-optimal results can be achieved by the proposed approach.
\end{list}

Finally, if we consider $R(m,n,k)$ as a general concave utility function of the rate on path $\mathcal{P}(m,n,k)$, then for concave and increasing rates, the utility function is also concave in the optimization variables, so that a similar optimization approach is applicable. An example is the weighted $\alpha$-fair utility function \cite{mo00}, which represents general fairness targets such as proportional fairness and max-min fairness. Note that this provides only fairness among the paths, instead of among users who could be assigned multiple paths.

\section{Numerical Results}
\label{sec_simulation}

In this section, we compare the performance of jointly optimal channel pairing, channel-user assignment, and power allocation  with that of suboptimal schemes. We further study the different factors that affect the performance gap under these schemes, in order to shed light on the tradeoff between performance optimality and implementation complexity. The suboptimal schemes considered are
\begin{itemize}
\item {\it No Pairing}: Channel-user assignment and power allocation are jointly optimized, but no channel pairing is performed, i.e.~the same incoming and outgoing channels are assumed. The solution is found by always assigning an identity matrix to $\mb{X}$ instead of solving \eqref{OPTXmn}.
\item {\it No PA}: Allocate power uniformly across all channels, subject to power constraints. Channel pairing and channel-user assignment are jointly optimized by solving \eqref{eq_opt_DF} with given power, which is a three-dimensional assignment problem over binary $\mb{\Phi}$. The solution is found by following the procedure in Section \ref{sec_op_Phi}.
\item {\it Separate Optimization}: The three-stage solution proposed in \cite{Zhou2009}, with channel-user assignment based on maximum channel gain over the second-hop channels, channel pairing based on sorted channel gains, and water-filling power allocation.
\item {\it Max Channel Gain}: Channel-user assignment by maximum channel gain over the second hop, with uniform power allocation and no channel pairing.
\end{itemize}

We use OFDMA as an example for a multi-channel system. The relaying network setup is shown in Fig.~\ref{config}, where the distance between the source and the relay is denoted by $d_{sr}$, and $K=4$ users are located on a half-circle arc around the relay with radius $d_{rd}$.
A $4$-tap frequency-selective propagation channel is assumed for each hop, and the number of channels is set to $N=16$. We define a nominal SNR, denoted by $\SNR_\nom$, as the average received SNR over each subcarrier under uniform power allocation. Specifically, with total power constraint $P_t$, we have  $\SNR_\nom \defeq \frac{{P_t}(\bar{d}_{sd})^{-\kappa}}{2\sigma^2N}$, where  $\kappa=3$ denotes the pathloss exponent, $\sigma^2$ denotes the noise power per channel, and $\bar{d}_{sd}$  denotes the average distance between the source and users. A total power constraint and equal individual power constraints on both the source and the relay are assumed with $P_s=P_r=\frac{2}{3}P_t$, unless it is stated otherwise.

\begin{figure}[tb]
\centering
\includegraphics[scale=.8]{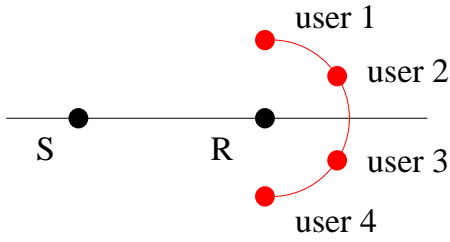}
\caption{Simulation configuration with $K=4$ users}
\label{config}
\end{figure}

\subsection{Performance versus Nominal SNR}

We compare the performance of various channel assignment and power allocation schemes at different $\SNR_\nom$ levels for $K=4$. We fix the ratio $d_{sr}/d_{rd}$ to be $1/3$. Fig.~\ref{EqualWeight} depicts the normalized weighted sum-rate (normalized over $N$) vs.~$\SNR_\nom$ for DF relaying with equal weight, \ie $\mb{w}\defeq [w_k]_{1\times K}=[.25,.25,.25,.25]$. The jointly optimal scheme outperforms the other suboptimal schemes, and provides as much as 20$\%$ gain over the \emph{Separate Optimization} scheme. The gain is increased when an unequal weight vector $\mb{w}$ is required to satisfy different user QoS demands or fairness. Fig.~\ref{NonEqualweight} shows the normalized weighted sum-rate vs.~$\SNR_\nom$ for $\mb{w}= [.15,.15,.35,.35]$, where a substantial gain is observed by employing the jointly optimal solution.

\begin{figure}[tb]
\centering
\includegraphics[scale=.5]{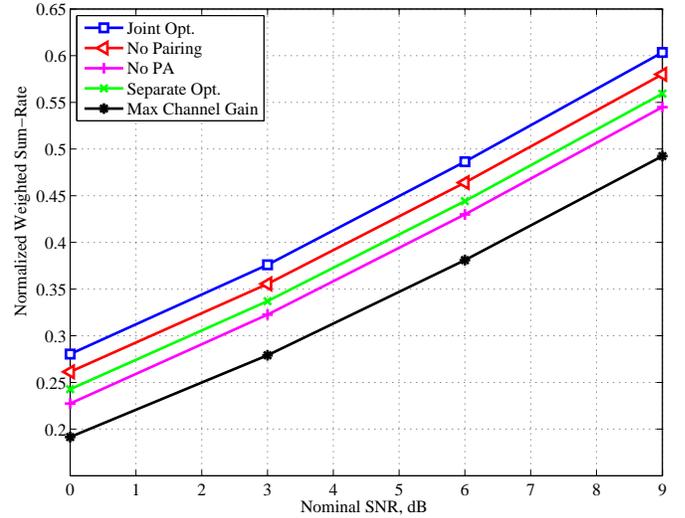}
\caption{Normalized weighted sum-rate vs.~nominal SNR with $w=[.25,.25,.25,.25]$, $N=16$, $K=4$, and DF relaying.} \label{EqualWeight}
\end{figure}

\begin{figure}[tb]
\centering
\includegraphics[scale=.5]{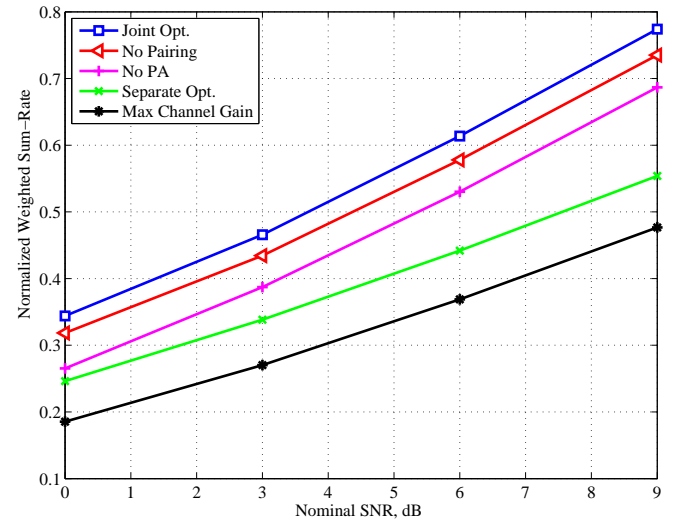}
\caption{Normalized weighted sum-rate vs.~nominal SNR with $w=[.15,.15,.35,.35]$, $N=16$, $K=4$, and DF relaying.}  \label{NonEqualweight}
\end{figure}

\subsection{Performance versus Number of Users}
In this experiment, we show how the number of users affects the performance of various resource assginment schemes. We increase the number of users in Fig~\ref{config}, and uniformly place them around the half-circle arc. In order to properly compare weighted sum-rate under different number of users, we do not normalize $\sum_{k=1}^Kw_k=1$. Instead, we fix $w_k=1$ for all $k$.  The nominal SNR is $\SNR_{\nom}=4$dB, and the ratio $d_{sr}/d_{rd} = 1/3$. Fig~\ref{rate_vs_users} shows the normalized weighted-sum rate vs. the number of users for DF relaying under total power constraint $P_t$. As we see, the sum-rate is improved due to the multi-user diversity gain with an increased number of users. In addition, consistent performance gain under joint optimization can be seen over different user population sizes.

\begin{figure}
\centering
\includegraphics[scale=.5]{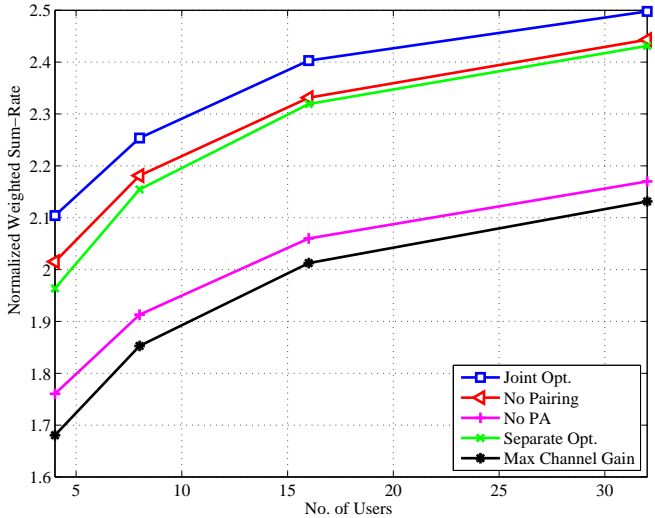}
\caption{Normalized weighted sum-rate vs. number of users with equal weight $w_i=1$, for $1\le i\le K$, $N=16$, and DF relaying.}
\label{rate_vs_users}
\end{figure}

\subsection{Impact of Relay Position}
Through this experiment, we study how the relay position affects the performance under various resource assignment schemes. The $K=4$ users are located close to each other as a cluster, and they have approximately the same distance to the relay and the source. We change the relay position along the path between the source and the user cluster.
Figs~\ref{rate_vs_dr_NeqW} and \ref{rate_vs_dr_NeqW_AF} demonstrate the normalized weighted-sum rate vs.~the ratio $d_{sr}/d_{sd}$.  We set $\mb{w}=[.15,.15,.35,.35]$, and $\SNR_\nom=3$dB. Fig.~\ref{rate_vs_dr_NeqW} shows the DF relaying case under both total and individual power constraints, and Fig.~\ref{rate_vs_dr_NeqW_AF} shows the AF relaying case under a total power constraint.

We see from Fig.~\ref{rate_vs_dr_NeqW} that better performance is observed when the relay is closer to the source than to the users in DF relaying, as correctly decoding data at relay is important in successful DF relaying. In addition, comparing the joint optimal scheme with \emph{No Pairing} scheme, we see that the gain of channel pairing is evident when the relay is closer to the source, but diminishes when the relay moves closer to the users. In the latter case, as the first-hop becomes the bottleneck, channel pairing at the second-hop provides no benefit. This is not the case for AF relaying. As shown in Fig.~\ref{rate_vs_dr_NeqW_AF}, channel pairing gain is observed throughout different relay positions. Furthermore, the performance of the jointly optimal solution only has mild variation throughout different relay positions, unlike the \emph{No PA} scheme. This suggests that the benefit of optimal power allocation for AF relaying is more significant when the relay is closer to either the source or the users.

\begin{figure}[tbph]
\centering
\includegraphics[scale=.5]{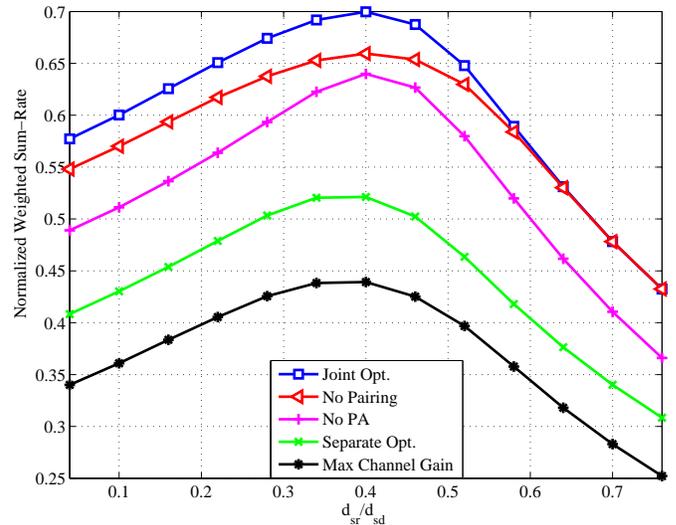}
\caption{Normalized weighted sum-rate vs. relay location; $K=4$, $\mb{w}=[.15,.15,.35,.35]$, $N=16$, and DF relaying.}  \label{rate_vs_dr_NeqW}
\end{figure}

\begin{figure}[tbph]
\centering
\includegraphics[scale=.5]{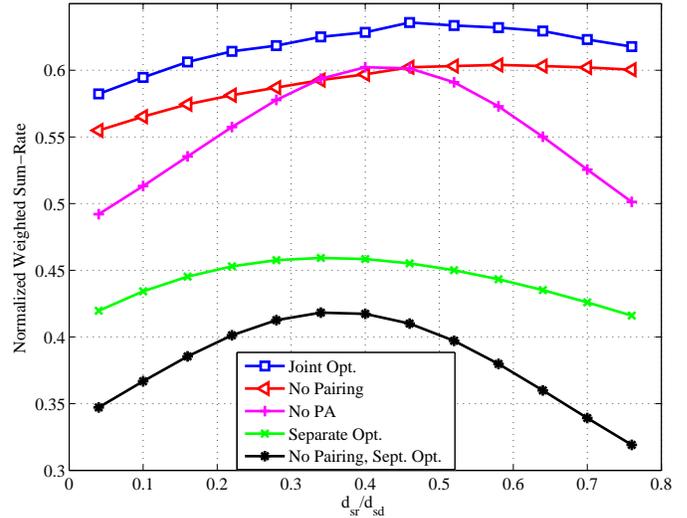}
\caption{Normalized weighted sum-rate vs. relay location; $K=4$, $\mb{w}=[.15,.15,.35,.35]$, $N=16$, and AF relaying.}  \label{rate_vs_dr_NeqW_AF}
\end{figure}

\section{Conclusion}\label{sec_conclusion}

We have studied the problem of jointly optimizing channel pairing, channel-user
assignment, and power allocation in a general single-relay multi-channel multi-user system. Although such joint optimization naturally leads to a mixed-integer programming formulation, we show that there is an efficient algorithm to find an optimal solution to our problem. The proposed approach transforms the original problem into a specially structured three-dimensional assignment problem, which not only preserves the binary constraints and strong Lagrange duality, but in some cases can also lead to polynomial-time computation complexity through careful choices of the optimization trajectory.  The proposed framework is applicable to a wide variety of scenarios.  The potentially significant improvement of system performance over suboptimal alternatives demonstrates the benefit of judicial design in such systems.

\appendices

\section{Derivation of Equation \eqref{eq_Ssrmnkopt}} \label{appendix_Smnk}

For notational simplicity, we drop all subscripts $m$, $n$, and $k$ from (\ref{Subproblem}).  We have the following maximization problem, which can be solved in the two cases below.
\begin{align}
&\max_{P^s,P^r} ~ \frac{w}{2} \tilde{\phi} \min\Big\{ \log\left(1+ \frac{aP^s}{\tilde{\phi}}\right), \nn \\
&  \log\left(1+ \frac{cP^s}{\tilde{\phi}}+ \frac{bP^r}{\tilde{\phi}}\right)  \Big\}
- (\l_s+\l_t) P^s - (\l_r+\l_t) P^r \label{Optimize} \\
s.t. &\quad P^s, P^r \geq 0 ~.  \nn
\end{align}

\subsection{Case One :  $a \leq c $}
In this case, the first term inside the $\min$ function in (\ref{Optimize}) is always smaller than the second term.  Hence, (\ref{Optimize}) is reduced to
\begin{align}
&\max_{P^s,P^r} ~ \frac{w}{2} \tilde{\phi} \log\left(1+ \frac{aP^s}{\tilde{\phi}}\right) - (\l_s+\l_t) P^s - (\l_r+\l_t) P^r \nn \\
s.t. &\quad  P^s, P^r \geq 0.  \label{Opt_case1}
\end{align}
Then, the optimal solutions from water-filling are obtained as
\beq
{P^s}^* = \left(\frac{w}{2(\l_s+\l_t)\ln 2} - \frac{1}{a} \right)^+ \tilde{\phi}, \quad {P^r}^* = 0 \label{OptSol1}
\eeq

\subsection{Case Two :  $a >  c $}
For this more complicated case, we propose the following solution.  We inspect the two possible outcomes in comparing the first and second terms in the $\min$ function in (\ref{Optimize}) \emph{at optimality}. Two separate maximization of (\ref{Optimize}) are performed under the constraint of either outcome.  Then, the optimal $(P^s, P^r)$ is given by the better of these two solutions.

\subsubsection{ Assumption 1: $a{P^s}^* \leq b{P^r}^* + c{P^s}^*$ }
Under this assumption, we have $b>0$ and the following optimization problem:
\begin{align}
&\max_{P^s,P^r} ~ \frac{w}{2} \tilde{\phi} \log  \left(1+ \frac{a P^s}{\tilde{\phi}}\right) - (\l_s+\l_t) P^s - (\l_r+\l_t) P^r \nn\\
s.t. &\quad  (i)~ P^s, P^r \geq 0  \nn \\
&\quad (ii)~ aP^s \leq bP^r + cP^s ~.  \label{Opt_case2_Assum1}
\end{align}
It has two possible solutions from the KKT conditions. One is obtained when the Lagrange multiplier corresponding to constraint (ii) is zero and the constraint is strictly satisfied. This implies that
\beq
{P^s}^* = \left(\frac{w}{2(\l_s+\l_t)\ln 2} - \frac{1}{a} \right)^+ \tilde{\phi}, \quad  {P^r}^* = 0 ~. \label{OptSol2_1}
\eeq
However, this solution contradicts with the assumption that (ii) is strictly satisfied. The other, correct solution occurs at the border $P^r = \frac{a-c}{b} P^s$.  By inserting this into the objective function, we have
\begin{align}
&{P^s}^* = \left(\frac{w b}{2 b(\l_s+\l_t)\ln 2 +2(a-c)(\l_r+\l_t)\ln 2} - \frac{1}{a} \right)^+\tilde{\phi} ~, \label{OptSol2_2}\\
&{P^r}^* = \frac{a-c}{b}{P^s}^* ~. \nn
\end{align}

\subsubsection{ Assumption 2: $a{P^s}^* \geq b{P^r}^* + c{P^s}^*$ }
Under this assumption, we have the following optimization problem:
\begin{align}
&\max_{P^s,P^r} ~ \frac{w}{2} \tilde{\phi} \log  \left(1+ \frac{c P^s}{\tilde{\phi}}+ \frac{b P^r}{\tilde{\phi}}\right) - \nn \\
& \quad \quad \quad (\l_s+\l_t) P^s - (\l_r+\l_t) P^r \nn\\
 s.t. &\quad (i)~ P^s, P^r \geq 0  \nn \\
 &\quad (ii)~ aP^s \geq bP^r + cP^s  \label{objec_Assum2}
\end{align}
From the KKT conditions, at optimality, either $b {P^r}^* = (a-c) {P^s}^*$, or the Lagrange multiplier corresponding to constraint (ii) is zero and the constraint is strictly satisfied.

In the former case, $b>0$ since $a \neq c$.  Furthermore, since the first and second terms in the $\min$ function in (\ref{Optimize}) are the same, we obtain the same solution as in \eqref{OptSol2_2}.

In the latter case, we define two new variables  $V^s=(\l_s + \l_t)P^s $ and $V^r=(\l_r + \l_t)P^r $. Substituting them into the objective of \eqref{objec_Assum2}, we have
\beq
\max_{V^s,V^r} \frac{w}{2} \tilde{\phi} \log\left(1+ \frac{c V^s}{\tilde{\phi}(\l_s+\l_t)} + \frac{b V^r}{\tilde{\phi}(\l_r+\l_t)}\right) - V^s - V^r ~. \label{Opt_V}
\eeq
The solution depends on the relation between $\frac{c}{\l_s+\l_t}$ and $\frac{b}{\l_r+\l_t}$:
\begin{itemize}

\item If $\frac{c}{\l_s+\l_t} > \frac{b}{\l_r+\l_t}$,
then we have ${V^r}^* = 0$, since otherwise a better solution to \eqref{Opt_V} would be $(V^s = {V^s}^* + {V^r}^*, V^r = 0)$.  Substituting ${V^r}^* = 0$ into \eqref{Opt_V}, we have ${V^s}^* = \left[\frac{w}{2\ln 2} - \frac{\l_s+\l_t}{c}\right]^+ \tilde{\phi}$.

\item If $\frac{c}{\l_s+\l_t} = \frac{b}{\l_r+\l_t}$,  \eqref{Opt_V} is a function of $(V^s+V^r)$ only, and ${V^r}^* = 0$ is a maximizer. Hence, again we have ${V^s}^* = \left[\frac{w}{2\ln 2} - \frac{\l_s+\l_t}{c}\right]^+ \tilde{\phi}$.

\item If $\frac{c}{\l_s+\l_t} < \frac{b}{\l_r+\l_t}$, similarly we have ${V^s}^* =0$.  However, this together with our assumption that constraint (ii) of \eqref{objec_Assum2} is strictly satisfied, i.e., $b{P^r}^* < (a-c){P^s}^*$, implies that ${V^r}^* < 0 $, which is not a feasible solution. Therefore, in this case at optimality the condition $b{P^r}^* = (a-c){P^s}^*$ prevails.

\end{itemize}

\section{Proof of Lemma \ref{lemma1}} \label{appendix1}

Given any $\tilde{\mathbf{\Phi}}=[\tilde{\phi}_{mnk}]_{N\times N\times K}$ with $0\leq \tilde{\phi}_{mnk} \leq 1$ and satisfying  \eqref{eq_tphi_sum}, let $x_{mn} = \sum_{k=1}^K \tilde{\phi}_{mnk}$. From $\sum_{m=1}^N\sum_{k=1}^K \tilde{\phi}_{mnk}=1$, we have $\sum_{m=1}^N x_{mn}=1$. Similarly, we have  $\sum_{n=1}^N x_{mn}=1$.  Hence $0 \leq x_{mn}\leq 1$. Then, $y_k^{mn}$ can be constructed as
\beq
y_k^{mn} = \begin{cases} \tilde{\phi}_{mnk}/x_{mn}, &x_{mn}>0 \\1/K, &x_{mn}=0 \end{cases} ~.
\eeq
Hence, $\sum_{k=1}^K y_k^{mn}=1$ and $0\leq y_k^{mn}\leq 1$. Note that $1/K$ above is arbitrarily chosen, and the mapping from $\tilde{\mathbf{\Phi}}$ to ($\mb{X}$, $\mb{y}^{mn}$) is one-to-many.

Given $\mb{X}$ and $\mb{y}^{mn}$ with $0 \leq x_{mn} \leq 1$ and $0\leq y_k^{mn} \leq 1$, satisfying
$\sum_{n=1}^N x_{mn} =1, \forall m$, $\sum_{m=1}^N x_{mn} =1, \forall n$,  and
$\sum_{k=1}^K y_k^{mn} =1, \forall m,n$, clearly $0\leq \tilde{\phi}_{mnk}=x_{mn}y_k^{mn} \leq 1$, and it is easy to verify that \eqref{eq_tphi_sum} is satisfied. This establishes the equivalence of $\tilde{\mathbf{\Phi}}$ and the proposed decomposition.

\section{Proof of Lemma \ref{lem_lambda_regions}}  \label{proof_lambda_regions}

We note that there exists at least one index vector $(m',n',k')$ such that $\phi^*_{m'n'k'}(\lvec^*) = 1$.  Furthermore, this chosen path must have non-degenerate user weight and channel gains so that the weighted rate function $w_{k'} R(m',n',k')$ is not uniformly zero, i.e., $w_{k'}>0$, $a_{m'}>0$, and $b_{n'k'}+c_{m'k'}>0$.

Suppose there exists $\phi^*_{m'n'k'}(\lvec^*) = 1$ such that $P_{m'n'k'}^{s*}(\lvec^*)$ is either $\big[\frac{w_{k'}}{\a (\l_s^*+\l_t^*)} - \frac{1}{a_{m'}} \big]^+$ or $\big[\frac{w_{k'}}{\a (\l_s^*+\l_t^*)} - \frac{1}{c_{m'k'}}\big]^+$.  In the former case, $a_{m'} \leq c_{m'k'}$, and the latter, $a_{m'} > c_{m'k'}$ and $\frac{c_{m'k'}}{\l_s^* + \l_t^*} \geq \frac{b_{n'k'}}{\l_r^* + \l_t^*}$. Furthermore, since $b_{n'k'}$ and $c_{m'k'}$ cannot both be zero in the latter case, $c_{m'k'} >0$.   Then we have
{\allowdisplaybreaks
    \begin{align}
    \min\{P_s,P_t\} &\geq \sum_{m,n,k} P_{mnk}^{s*}(\lvec^*) \geq P_{m'n'k'}^{s*}(\lvec^*) \nn\\
    &\geq \min\Big\{ \big[\frac{w_{k'}}{\a (\l_s^*+\l_t^*)} - \frac{1}{a_{m'}} \big]^+ , \nn \\ &\big[\frac{w_{k'}}{\a (\l_s^*+\l_t^*)} - \frac{1}{c_{m'k'}}\big]^+ \Big\} \nn\\
    &\geq \frac{w_{k'}}{\a (\l_s^*+\l_t^*)} - \frac{1}{\min\{ a_{m'}, c_{m'k'} \}} ~.
    \end{align}}%
    Hence,
    $\l_s^*+\l_t^* \geq \frac{w_{k'}\min\{ a_{m'}, c_{m'k'} \}}{\a (\min\{ a_{m'}, c_{m'k'} \}\min\{P_s,P_t\} + 1) }$,
    so $\lvec^* \in \mathcal{R}_1$.

Otherwise, for all $\phi^*_{mnk}(\lvec^*) = 1$, we have $a_{m} > c_{mk}$ and $P_{mnk}^{s*}(\lvec^*) = \big[ \frac{w_{k} b_{nk}}{\a (b_{nk} (\l_s^* + \l_t^*) + (a_{m} - c_{mk}) (\l_r^* + \l_t^*))} - \frac{1}{a_{m}} \big]^+$. We proceed with the following cases:

\begin{itemize}
\item If there exists $\phi^*_{m'n'k'}(\lvec^*) = 1$ such that $c_{m'k'} > 0$ and $\frac{c_{m'k'}}{\l_s^* + \l_t^*} < \frac{b_{n'k'}}{\l_r^* + \l_t^*}$, then we have
    \begin{align}
    &\min\{P_s,P_t\} \geq \sum_{m,n,k} P_{mnk}^{s*}(\lvec^*) \geq P_{m'n'k'}^{s*}(\lvec^*) \nn\\
    &\geq \frac{w_{k'} b_{n'k'}}{\a (b_{n'k'} (\l_s^* + \l_t^*) + (a_{m'} - c_{m'k'}) (\l_r^* + \l_t^*))} - \frac{1}{a_{m'}} ~, \nn\\
    &\geq \frac{w_{k'} b_{n'k'}}{\a (b_{n'k'} (\l_s^* + \l_t^*) + \frac{b_{n'k'}(a_{m'} - c_{m'k'})}{c_{m'k'}} (\l_s^* + \l_t^*))} - \frac{1}{a_{m'}} ~,
    \end{align}
    which implies that
    $\l_s^* + \l_t^* \geq \frac{w_{k'}c_{m'k'}}{\a (a_{m'}\min\{P_s,P_t\} + 1)}$,
    and hence $\lvec^* \in \mathcal{R}_1$.

\item Else, if there exists $\phi^*_{m'n'k'}(\lvec^*) = 1$ such that $\frac{c_{m'k'}}{\l_s^* + \l_t^*} \geq \frac{b_{n'k'}}{\l_r^* + \l_t^*}$, then $c_{m'k'} > 0$ since $b_{n'k'}$ and $c_{m'k'}$ cannot both be zero. In this case, the third expression in \eqref{eq_Ssrmnkopt} applies to the path $(m',n',k')$. Since $\big[ \frac{w_{k'} b_{n'k'}}{\a (b_{n'k'} (\l_s^* + \l_t^*) + (a_{m'} - c_{m'k'}) (\l_r^* + \l_t^*))} - \frac{1}{a_{m'}}\big]^+$ is the optimal power allocation for this path, we have
    \begin{align}
    \mathcal{L}_{m'n'k'} &(1, P^{s1}_{m'n'k'},P^{r1}_{m'n'k'},\lvec^*) \geq \nn \\ &\mathcal{L}_{m'n'k'} (1,P^{s2}_{m'n'k'},P^{r2}_{m'n'k'},\lvec^*) ~,
    \end{align}
    where
    \begin{align}
    P^{s1}_{m'n'k'} \defeq &\Big[ \frac{w_{k'} b_{n'k'}}{\a (b_{n'k'} (\l_s^* + \l_t^*) + (a_{m'} - c_{m'k'}) (\l_r^* + \l_t^*))} - \nn \\ &\frac{1}{a_{m'}}\Big]^+ \nn\\
    P^{s2}_{m'n'k'} \defeq &\Big[ \frac{w_{k'}}{\a (\l_s^* + \l_t^*)} - \frac{1}{c_{m'k'}} \Big]^+
    \end{align}
    correspond to the cases $a_{m'} P^{s1}_{m'n'k'} = b_{n'k'}P^{r1}_{m'n'k'} + c_{m'k'}P^{s1}_{m'n'k'}$ and $a_{m'} P^{s2}_{m'n'k'} > b_{n'k'}P^{r2}_{m'n'k'} + c_{m'k'}P^{s2}_{m'n'k'}$, respectively. This implies that \eqref{long_formula}.

    \begin{figure*}[!t]

    \begin{align}
    \frac{w_{k'}}{2} \log(1+a_{m'} P^{s1}_{m'n'k'}) - (\l_s^* + \l_t^*)
    (1+\frac{a_{m'} -c_{m'k'}}{b_{n'k'}}) P^{s1}_{m'n'k'} &\geq \frac{w_{k'}}{2} \log(1+c_{m'k'} P^{s2}_{m'n'k'}) - (\l_s^* + \l_t^*)P^{s2}_{m'n'k'} \nn\\
    \frac{w_{k'}}{2} \log(1+a_{m'} P^{s1}_{m'n'k'}) + (\l_s^* + \l_t^*)P^{s2}_{m'n'k'} &\geq \frac{w_{k'}}{2} \log(1+c_{m'k'} P^{s2}_{m'n'k'}) \nn\\
    \frac{w_{k'}}{2} \log(1+a_{m'} \min\{P_s,P_t\}) + \frac{w_{k'}}{\alpha} &\geq \frac{w_{k'}}{2} \log(1+c_{m'k'} P^{s2}_{m'n'k'}) \nn\\
    4+4a_{m'} \min\{P_s,P_t\} &\geq 1+c_{m'k'} P^{s2}_{m'n'k'} \nn\\
    4+4a_{m'} \min\{P_s,P_t\} &\geq \frac{w_{k'} c_{m'k'}}{\a (\l_s^* + \l_t^*)} ~. \label{long_formula}
    \end{align}
     \hrulefill
    \end{figure*}
    Hence, $$\l_s^* + \l_t^* \geq \frac{w_{k'} c_{m'k'}}{4\a (a_{m'} \min\{P_s,P_t\} + 1)},$$ so $\lvec^* \in \mathcal{R}_1$.

\item Else, the only scenario left is when $c_{mk} = 0$ for $\{m,k: \phi^*_{mnk}(\lvec^*) = 1\}$. In this case, there exists $b_{nk}>0$, since otherwise the achieved sum-rate is uniformly zero.  Then for any $(m',n',k')$ such that $\phi^*_{m'n'k'}(\lvec^*) = 1$,
    \begin{align}
    \min&\{P_s,P_t\} \geq \sum_{m,n,k} P_{mnk}^{s*}(\lvec^*) \geq P_{m'n'k'}^{s*}(\lvec^*) \nn\\
    &\geq \frac{w_{k'} b_{n'k'}}{\a (b_{n'k'} (\l_s^* + \l_t^*) + a_{m'} (\l_r^* + \l_t^*))} - \frac{1}{a_{m'}} ~,
    \end{align}
    which implies that
    $(\l_s^* + \l_t^*) + \frac{a_{m'}}{b_{n'k'}} (\l_r^* + \l_t^*) \geq \frac{w_{k'}}{\a (\min\{P_s,P_t\} + \frac{1}{a_{m'}})}$. Considering the extreme case for the slope and intercept of this linear inequality, we have $\lvec^* \in \mathcal{R}_2$.

\end{itemize}

\section{Proof of Lemma \ref{lem_ub_lambda}} \label{proof_ub_lambda}

To find an upper bound for $\|\lvec^*\|_2$, we consider the following cases for the activation pattern of the individual power constraints \eqref{eq_ind_power} and the total power constraint \eqref{eq_totalpower} at global optimum.
\begin{itemize}
\item{\textit{Neither constraint in \eqref{eq_ind_power} is active:}} In this case, \eqref{eq_totalpower} must be active, since otherwise there would be more power to increase the sum-rate. Thus, we have $\l_s^*=\l_r^*=0$ and
    \beq
    P_t = \sum_{m,n,k} P_{mnk}^{s*}(\lvec^*) + P_{mnk}^{r*}(\lvec^*)~.
    \eeq
    Since $\phi^*_{mnk}(\lvec^*) \leq 1$, substituting $\phi^*_{mnk}(\lvec^*)=1$ and $\l_s^*=\l_r^*=0$ into \eqref{eq_Ssrmnkopt}, and considering all possible scenarios of \eqref{eq_Ssrmnkopt}, we have
    \begin{align}
    P_t &\leq \sum_{m,n,k} \max\{ \big[\frac{w_{k}}{\a \l_t^*} - \frac{1}{a_{m}}\big]^+, \big[\frac{w_{k}}{\a \l_t^*} - \frac{1}{c_{mk}}\big]^+, \frac{w_{k}}{\a \l_t^*} \} \nn\\
    &= \sum_{m,n,k} \frac{w_{k}}{\a \l_t^*} = \frac{N^2}{\a \l_t^*}~.
    \end{align}
    Hence, we have $\l_t^* \leq \frac{N^2}{\a P_t}$,
    so that
    \beq
    \|\lvec^*\|_2 \leq \frac{N^2}{\a P_t} ~.
    \eeq
\item{\textit{Both constraints in \eqref{eq_ind_power} are active:}} We have
    \begin{align}
    P_s &= \sum_{m,n,k} P_{mnk}^{s*}(\lvec^*) ~, \nn\\
    P_r &= \sum_{m,n,k} P_{mnk}^{r*}(\lvec^*) ~.
    \end{align}
    Again, substituting $\phi^*_{mnk}(\lvec^*)=1$ into \eqref{eq_Ssrmnkopt}, and considering all possible scenarios, we conclude that
    {\allowdisplaybreaks
    \begin{align}
    & P_s \leq \sum_{m,n,k} \max\Big\{ \big[\frac{w_{k}}{\a (\l_s^*+\l_t^*)} - \frac{1}{a_{m}}\big]^+, \nn \\
    &\quad \big[\frac{w_{k}}{\a (\l_s^*+\l_t^*)} - \frac{1}{c_{mk}}\big]^+, \nn\\
    & \quad  \frac{1(a_m > c_{mk})w_{k}b_{nk}}{\a (b_{nk} (\l_s^*+\l_t^*) + (a_m-c_{mk})(\l_r^*+\l_t^*)) } \Big\} \nn\\
    &\leq \sum_{m,n,k} \frac{w_{k}}{\a (\l_s^*+\l_t^*)} = \frac{N^2}{\a (\l_s^*+\l_t^*)} ~, \label{eq_bound_ps}\\
    P_r &\leq \sum_{m,n,k}  \frac{1(a_m > c_{mk})w_{k}(a_m-c_{mk})}{\a (b_{nk} (\l_s^*+\l_t^*) + (a_m-c_{mk})(\l_r^*+\l_t^*)) } \nn\\
    &\leq \sum_{m,n,k} \frac{w_{k}}{\a (\l_s^*+\l_t^*)}  = \frac{N^2}{\a (\l_r^*+\l_t^*)} ~, \label{eq_bound_pr}
    \end{align}
    Hence, we have $\l_s^*+\l_t^* \leq \frac{N^2}{\a P_s}$, and $\l_r^*+\l_t^* \leq \frac{N^2}{\a P_r}$, so that
    \beq
    \|\lvec^*\|_2 \leq \frac{\sqrt{2} N^2}{\a \min \{P_s, P_r\}}~.  \label{eq_bound_pspr}
    \eeq
    }%
\item{\textit{Only one constraint in \eqref{eq_ind_power} is active:}} We have either $\l_r^*=0$ and \eqref{eq_bound_ps}, or $\l_s^*=0$ and \eqref{eq_bound_pr}.  For both cases, an upperbound for $\|\lvec^*\|_2$ is given by \eqref{eq_bound_pspr}.
\end{itemize}
Summarizing the three cases above, we have
\beq
\|\lvec^*\|_2 \leq \frac{\sqrt{2} N^2}{\a \min \{P_s, P_r, P_t\}} ~.  \label{eq_bound_psprpt}
\eeq

\section{Proof of Lemma \ref{lem_ub_theta}} \label{proof_ub_theta}

We first note that there must exist one path $(m,n,k)$ with non-degenerate user weight and channel gains so that the weighted rate function $w_{k} R(m,n,k)$ is not uniformly zero, i.e., $w_{k}>0$, $a_{m}>0$, and $b_{nk}+c_{mk}>0$.
In Algorithm \ref{alg_subgradient}, subgradient updating is performed over two cases, either there exists some $c_{mk}>0$ and $\lvec^{(l)} \in \mathcal{R}_1$, or $c_{mk}=0$ for all $m$ and $k$ and $\lvec^{(l)} \in \mathcal{R}_2$.

In the former case, let $\epsilon_1 = \frac{\displaystyle \min_{\{k: w_k>0\}} w_k  \min \{ \min_{\{m: a_m>0\}} a_m, \min_{\{m,k: c_{mk}>0\}} c_{mk} \}}{\displaystyle 4\a (\max_m a_m \min\{P_s,P_t\} + 1)}$.  The projected subgradient updating is performed over $\l_s^{(l)}+\l_r^{(l)} \geq \epsilon_1$.  Since there exists some $m$ and $k$ such that $w_k>0$, $a_m>0$, and $c_{mk}>0$, we have $\epsilon_1 > 0$.  From \eqref{eq_Ssrmnkopt}, we see that in all scenarios
\begin{align}
P^{s*}_{mnk}(\lvec^{(l)}) &\leq \frac{w_k}{\a (\l_s^{(l)}+\l_r^{(l)})} \leq \frac{w_k}{\a \epsilon_1} <\infty ~, \label{ps_1}\\
P^{r*}_{mnk}(\lvec^{(l)}) &\leq \frac{a_m - c_{mk}}{b_{nk}} P^{s*}_{mnk}(\lvec^{(l)}) \leq \frac{w_k (a_m - c_{mk})}{\a \epsilon_1 b_{nk}} <\infty ~. \label{pr_1}
\end{align}
The second inequality above hold since $P^{r*}_{mnk}(\lvec^{(l)}) = 0$ when $b_{nk}=0$.

In the latter case, let $\epsilon_2 = \frac{\displaystyle \min_{\{k: w_k>0\}} w_k}{\displaystyle \a \big(\min\{P_s,P_t\} + \frac{1}{\displaystyle \min_{\{m: a_m>0\}} a_m}\big) }$. The projected subgradient updating is performed over $(\l_s + \l_t) + \frac{\displaystyle \min_{\{m: a_m>0\}} a_m }{\displaystyle \max_{n,k} b_{nk}} (\l_r + \l_t) \geq \epsilon_2$.  Since there exists some $m$ and $k$ such that $w_k>0$, $a_m>0$, we have $\epsilon_2>0$. Furthermore, $b_{nk}>0$ since   From \eqref{eq_Ssrmnkopt}, we see that in all scenarios
\begin{align}
P^{s*}_{mnk}(\lvec^{(l)}) &\leq \frac{w_k}{\a (\l_s^{(l)}+\frac{a_m}{b_{nk}} \l_r^{(l)})} \leq \frac{w_k}{\a \epsilon_2} <\infty ~, \label{ps_2}\\
P^{r*}_{mnk}(\lvec^{(l)}) &\leq \frac{a_m - c_{mk}}{b_{nk}} P^{s*}_{mnk}(\lvec^{(l)}) \leq \frac{w_k (a_m - c_{mk})}{\a \gamma b_{nk}} <\infty ~. \label{pr_2}
\end{align}
The second inequality above hold since $P^{r*}_{mnk}(\lvec^{(l)}) = 0$ when $b_{nk}=0$.

We note that the bounds in \eqref{ps_1}-\eqref{pr_2} are not functions of $m$ and $n$. Substituting the bounds of either of these two cases into \eqref{subgradient}, and noting that only $N^2$ paths are chosen, we have $\|\thetavec(\lvec)\|_2 = O(N^2)$.

\bibliographystyle{IEEEtran} 
\bibliography{IEEEabrv,master}

\begin{IEEEbiography}{Mahdi Hajiaghayi}
(S'10) received the B.Sc.~degrees in electrical engineering and petroleum engineering simultaneously from Sharif University of Technology, Tehran, Iran, in 2005, and the M.Sc.~degree in electrical and computer engineering from the University of Alberta, Edmonton, Alberta, Canada, in 2007. He is currently pursuing the Ph.D.~degree in electrical and computer engineering at the University of Toronto, Toronto, Ontario, Canada. He won the Ontario Graduate Scholarship in 2010-2011.
\end{IEEEbiography}

\begin{IEEEbiography}{Min Dong}
(S'00-M'05-SM'09) received the B.Eng. degree from Tsinghua University, Beijing, China, in 1998, and the Ph.D. degree in Electrical and Computer Engineering with minor in Applied Mathematics from Cornell University, Ithaca, NY, in 2004. During 2004-2008, she was with Corporate Research and Development, Qualcomm Inc., San Diego, CA. Since July 2008, she has been with the Faculty of Engineering and Applied Science at the University of Ontario Institute of Technology, Ontario, Canada, where she is currently an Assistant Professor. She also holds a status-only Assistant Professor appointment in the Department of Electrical and Computer Engineering at the University of Toronto. She received the Ontario MEDI Early Researcher Award in 2012, and the 2004 IEEE Signal Processing Society Best Paper Award. She currently serves as an Associate Editor for the IEEE Transactions on Signal Processing and IEEE Signal Processing Letter.
\end{IEEEbiography}

\begin{IEEEbiography}{Ben Liang}
(S'94-M'01-SM'06) received honors simultaneous B.Sc.~(valedictorian) and M.Sc.~degrees in electrical engineering from Polytechnic University in Brooklyn, New York, in 1997 and the Ph.D.~degree in electrical engineering with computer science minor from Cornell University in Ithaca, New York, in 2001. In the 2001 - 2002 academic year, he was a visiting lecturer and post-doctoral research associate at Cornell University. He joined the Department of Electrical and Computer Engineering at the University of Toronto in 2002, where he is now a Professor. He received an Intel Foundation Graduate Fellowship in 2000 and an Ontario MRI Early Researcher Award in 2007. He was a co-author of the Best Paper Award in the IFIP Networking conference in 2005 and a Best Paper Finalist in IEEE INFOCOM in 2010. He is an editor for the IEEE Transactions on Wireless Communications and an associate editor for the Wiley Security and Communication Networks journal.
\end{IEEEbiography}

\end{document}